\def\Y{{\bf Y}}
\def\y{{\bf y}}
\def\P{\mathcal{P}}
\def\S{\mathcal{S}}
\def\M{{\bf M}}
\def\given{\, | \,}
\def\pr{\text{pr}}
\def\id{\text{id}}
\def\bft{{\bf t}}
\def\bfw{{\bf w}}
\def\bfu{{\bf u}}
\def\bfs{{\bf s}}
\def\Nat{\mathbb{N}}
\def\Real{\mathbb{R}}
\def\indep{\mathrel{\rlap{$\perp$}\kern1.6pt\mathord{\perp}}}
\newtheorem{theorem}{Theorem}
\newtheorem{lemma}{Lemma}
\newtheorem{corollary}{Corollary}
\theoremstyle{definition}
\newtheorem{definition}{Definition}
\newtheorem{example}{Example}
\begin{document}
\title{Exchangeable Markov multi-state survival processes}
\author{Walter Dempsey\thanks{Department of Statistics, Harvard
    University, One Oxford Street, Cambridge, MA 02138, U.S.A. E-mail:
wdempsey@fas.harvard.edu}}
\maketitle

\begin{abstract}
We consider {\em exchangeable Markov
multi-state survival processes} -- temporal processes 
taking values over a state-space~$\S$ with at least one absorbing
failure state~$\flat \in \S$ that satisfy natural invariance
properties of exchangeability and consistency under subsampling.
The set of processes contains many well-known examples 
from health and epidemiology -- survival,
illness-death, competing risk, and comorbidity processes;
an extension leads to recurrent event processes.

We characterize exchangeable Markov
multi-state survival processes in both discrete 
and continuous time.
Statistical considerations impose natural constraints
on the space of models appropriate for applied
work.  In particular, we describe constraints 
arising from the notion of {\em composable systems}.
We end with an application of the developed models 
to irregularly sampled and potentially censored multi-state survival data,
developing a Markov chain Monte Carlo algorithm for posterior
computation. 
\end{abstract}


\section{Introduction}\label{section:introduction}
In many clinical survival studies, a recruited patient's health status is
monitored on either a regular or intermittent schedule until either
(1) an event of interest (e.g., failure) or (2) the end of the study
window. Covariates are recorded at the time of recruitment, and
treatment protocols per patient are presumed known at baseline. In the
simplest survival study, health status $Y(t)$ at time~$t$ is a binary
variable, dead~$(0)$ or alive~$(1)$. In clinical trials with health
monitoring, $Y(t)$~is a more detailed description of the state of
health 
of the individual, containing relevant patient information, e.g.,
pulse rate, cholesterol level, cognitive score or CD4 cell count
\citep{Diggle2008, Farewell2010,Kurland2009}.

In this paper, we consider the setting in which the health process takes
values in some pre-specified ``state-space''.  For example,
in the illness-death model, we summarize the current state of the 
participant as taking one of three 
possible values~$\{ \text{Healthy, Ill, Dead} \}$.
Such a process can be thought of as a coarse view of the state of health for 
a patient over time.  
The continuing importance of multi-state processes in applications 
cannot be overstated~\citep{Jepsen2015, denHout2016}. 

When no baseline covariates are measured beyond the initial
state~$Y(0)$, the model for the set of patient state-space processes
should satisfy natural constraints.  First, the model should be
agnostic to patient labeling.  Second, the model should be agnostic to
sample size considerations.
These natural constraints (mathematically defined in
section~\ref{section:state_space_models}) lead to the concept of
\emph{exchangeable Markov multi-state survival processes}.  
The purpose of this paper is to characterize the
set of multi-state survival processes and show how this theory 
of exchangeable stochastic processes fits
naturally into the applied framework of event-history analysis.
Both the parametric continuous-time Markov process with 
independent participants and the nonparametric counting 
process are contained as limiting cases.
Next, we discuss the notion of ``composable systems'' and its effect on
model specification. A Markov Chain Monte Carlo (MCMC) algorithm is
then derived for posterior computations given irregularly sampled
multi-state survival data. We end with an application to a cardiac
allograft vasculopathy (CAV) multi-state survival study.

\subsection{Related work}

Odd Aalen was one of the first to recognize 
the importance of incorporating the ``theory of 
stochastic processes'' into an ``applied framework
of event history analysis''~\cite[p. 457]{Aalen2008}. 
Martingales and counting processes form the basis
of this nonparametric approach.  The fundamental
concept of the product-integral unifies discrete
and continuous-time processes.  
Nonparametric methods, however, do not adequately handle intermittent
observations.  For example,~\citet{Aalen2015} consider dynamic path
analysis for a liver cirrhosis dataset.
In this study, the prothrombin index, a composite blood coagulation
index related to liver function, is measured initially at three-month
intervals and subsequently at roughly twelve-month intervals.
To deal with the intermittency of observation times,~\citet{Aalen2015}
use the ``last-observation carried forward'' (LOCF) assumption. 
However, such an assumption is unsatisfactory for highly variable
health processes, and can lead to biased estimates~\citep{PNASreport}.
\citet{Jepsen2015} discuss the importance of multi-state and competing
risks models for complex models of cirrhosis progression.  Here again,
the nonparametric approach assumes observation times correspond to
transition times of the multi-state process.

One alternative is to consider parametric models such as
continuous-time Markov processes. 
Prior work~\citep{Saeedi2011, Hajiaghayi2014, RaoTeh2013} has focused on 
estimation of parametric continuous-time Markov processes under
intermittent observations.
Most parametric models, however, make strong assumptions about the
underlying state-space process; in particular, most models assume
independence among patients.
One implication is that observing sharp changes in health among prior
patient trajectories at a particular time since recruitment will not
impact the likelihood of a similar sharp change in a future patient at
the same timepoint. 
The proposed approach in this paper balances between the nonparametric
and parametric approaches.



%



\section{Multi-state survival models}\label{section:state_space_models}

In this section we formally define the multi-state survival process and the 
notions of exchangeability, Kolmogorov-consistency,
and the Markov property.  We combine these 
in section~\ref{section:framework} to provide characterization
theorems for these processes
in discrete and continuous-time.


\subsection{Multi-state survival process}\label{section:state_space_data}

Formally  the {\em multi-state survival process},~$\Y$, is a function
from the label set~$\Nat \times \mathcal{T}$
into the state space~$\mathcal{S}$.
For now, we assume the cardinality is finite (i.e., ~$|\mathcal{S}| <\infty$).
If the response is in {\em discrete-time}, 
then the process is defined on $\mathcal{T} = \Nat$.
If the response is in {\em continuous-time}
then the process is defined on $\mathcal{T} = \Real^{+}$.
Each label is a pair $(u, t)$, and the value 
$\Y(u,t)$ is an element of~$\mathcal{S}$ corresponding to the
state of patient~$u$ at time~$t$.

The distinguishing characteristic of survival processes 
is \emph{flatlining}~\citep{DempseyLDA}; that is, there
exists an absorbing state~$\flat \in \S$ such that~$Y(u, t) = \flat$
implies~$Y(u, t^\prime) = \flat$ for all~$t^\prime > t$.
Thus, the survival time~$T_u$ for unit~$u$ is a deterministic function of the
multi-state survival process~$\Y$:
\[
T_u = \inf \{ t \geq 0 : Y (u, t) = \flat \}.
\]
For all~$u \in \Nat$, we assume $Y(u, 0) \neq \flat$ at recruitment,
so $T_u > 0$.  Multiple absorbing states~$\{\flat_c\}$ representing
different terminal events may occur, such as for competing risk
processes. 

Without loss of generality, we 
assume $\mathcal{S} = \{1,\ldots, s\} =: 
[s]$.  
For example, if the state-space is~$\mathcal{S} = 
\{ \text{Alive}, \text{Dead} \}$,
we recode this to $[2] = \{ 1, 2 \}$.
In this case,~$\flat = 2$ is the flatlining state.
At each time~$t$, the population-level
process is given by~$\Y (t) = \{ Y(u,t) \given u \in \Nat\}
\in [s]^{\mathbb{N}}$. 
We write $y$ to denote a generic
element of $[s]^{\mathbb{N}}$.
We write~$\Y_{A}$ 
to denote the restriction
of the state space process
to~$u \in A \subset \mathbb{N}$.
We call~$\Y_{[n]}$ the \emph{$n$-restricted
state-space process} for~$[n] := \{1,\ldots,n\}$. We write~$y_{[n]}$
to denote a generic element of~$[s]^{n}$.



\subsection{Transition graphs for multi-state survival process}

The \emph{transition graph} represents the set of potential
transitions between elements of the multi-state survival process out
of the set of $s^2$ possible transitions.
The transition graph is a directed graph~$G = (V,E)$. The vertex
set~$V = [s]$ is all potential states; the directed edge set~$E$
contains all edges $(i,j)$ such that at jump times the Markov process
can jump from $i$ to $j$.  In the illness-death model, for example, a
patient can jump from  {\em Alive} to {\em Ill} but not back;
therefore (Alive, Ill) is in the edge set but not (Ill,
Alive). Example~\ref{example:idp} in supplementary
section~\ref{section:examples} provides additional details. In the
bi-directional illness-death 
model, both edges are present in the transition graph. In
continuous-time, jumps can only occur between distinct states, in
which case the number of possible edges is $s (s-1)$. 
In this case,~$(i,i) \not \in E$ for all~$i \in V$. Any absorbing
state~$i \in [s]$ satisfies~$(i,j) \not \in E$ for all~$j \neq i \in
[s]$. We write~$\P_{G}$ to denote the set of $|S|$ by $|S|$ transition
matrices~$P$ satisfying $\sum_{j \in V} P_{i,j} = 1$,~$P_{i,j} \geq 0$
for all $i,j \in V$, and $P_{i,j} = 0$ for all $(i,j) \not \in E$.
In the continuous-time setting, define~$P_{i,i} = 1 - \sum_{j, (i,j)
  \in E} P_{i,j}$.

\subsection{Consistency under subsampling}

We first note that sample size is often an arbitrary choice based on
power considerations and/or patient recruitment
constraints. Statistical models should be agnostic to such
considerations. That is, observing $n$ units versus $n+1$ units and
then restricting to the first~$n$ units should be equivalent, i.e.,
the model should exhibit \emph{consistency under subsampling}.

Consider the multi-state survival process~$\Y_{[m]}$ for~$m > n$. Define the
restriction operator~$R_{m,n}$ to be the restriction of~$\Y_{[m]}$ to
the first $n$ individuals. Then the process is \emph{consistent under
  subsampling} if $R_{m,n} (\Y_{[m]} )$ is a version of~$\Y_{[n]}$ for
all~$[m] \supset [n]$. Under the consistency assumption, the
process~$\Y_{[n]}$ satisfies \emph{lack of interference};
mathematically,
\[
  \pr ( \Y_{[n]} \in A \given H_{[m]}(t)) 
  = \pr ( \Y_{[n]} \in A \given H_{[n]} (t) )
\]
where $H_{[l]} (t)$ is the $\sigma$-field generated by the variables
$Y(i,s)$ for $i\in [l]$ and $s\le t$. Lack of interference is
essential, ensuring the $n$-restricted multi-state survival process is
unaffected by the multi-state survival process for subsequent components.
Consistency under subsampling ensures the statistical models
are embedded in suitable structures that permit extrapolation.
Without it, one is forced into the awkward situation of 
only being interested in the current collected dataset.
Consistency ensures statistical inference to be a special
case of ``reasoning by induction''~\citep{Finetti1972}.

\subsection{Exchangeability}

We next note that the patient labels,~$u \in \Nat$, are also
arbitrary. Therefore, any suitable multi-state survival process
must be agnostic to patient relabeling.
We define a multi-state survival process~$\Y$  
to be {\em [partially] exchangeable} 
if for any permutation
$\sigma : [n] \to [n]$, the
relabeled process ${\Y}^{\sigma}_{[n]} 
= \{ Y(\sigma(1), t), \ldots, Y(\sigma(n), t) \given t \in \mathcal{T} \}$ 
is a version of~${\Y}_{[n]}$.

\subsection{Time-homogeneous Markov process}

We say~$\Y_{[n]}$ is a {\em time-homogeneous Markov process} if, for
every $t, t^\prime \geq 0$, the conditional distribution of~$\Y_{[n]}
(t+t^\prime)$ given the multi-state survival process history up to
time~$t$,~$\mathcal{H}_{[n]} (t)$, only depends on $\Y_{[n]} (t)$ and
$t^\prime$. This Markovian assumption is a simplifying assumption
which leads to mathematically tractable conclusions. In this paper, we
restrict our attention to time-homogeneous processes; therefore, we
simply say $\Y_{[n]}$ is Markovian.


\section{Markov, exchangeable multi-state survival processes}\label{section:framework}

We define a multi-state survival process that is Markovian,
exchangeable, and consistent under sampling as a {\em Markov,
  exchangeable multi-state survival process}. 
Below, we characterize these processes in both discrete and continuous
time. The behavior is markedly different in each setting with
continuous-time Markov processes exhibiting much more complex behavior
-- allowing for both single-unit changes at time~$t$ as well as
positive fraction changes -- showing why choice of time-scale matters in
applied settings.

\subsection{Discrete-time multi-state survival models}

In discrete-time, the Markov,
exchangeable multi-state survival process is governed by a series of
random transition matrices~$P_t$ each drawn independently from a
probability measure~$\Sigma$ on $\P_G$.
That is, the initial state~$\Y (0)$ is
drawn from an exchangeable distribution on~$[s]$.
Then at time~$t$, the transition
distributions for each~$u \in \Nat$
are given by
\[
\pr ( Y (u,t) = i^\prime \given Y (u,t-1) = i )
\sim [P_t]_{i,i^\prime}
\]
i.e., the $(i,i^\prime)$ entry of $P_t$, which is a
random transition matrix drawn from~$\Sigma$.
Let~$\Y^{\star}_{\Sigma}$ denote a discrete-time
process constructed by this procedure with probability
measure~$\Sigma$. By construction, the process is an
exchangeable, Markov multi-state survival process
in discrete time.
Theorem~\ref{thm:discrete_main} states
that this procedure describes
all such processes. 
The proof is left to supplementary section~\ref{section:discrete_proof}.


\begin{theorem}[Discrete-time characterization] \normalfont
  \label{thm:discrete_main}
  Let~$\Y = \{ Y (u, t), u \in \Nat, t \in \Nat\}$ be a
  Markov, exchangeable 
  multi-state survival process.  
  Then there exists a probability measure
  $\Sigma$ on $\P_G$ such that~$\Y_{\Sigma}^\star$ is a 
  version of~$\Y$.
\end{theorem}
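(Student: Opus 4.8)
The plan is to exploit the three structural hypotheses---the Markov property, exchangeability, and consistency under subsampling---in sequence, reducing the claim to a de Finetti--type statement about the one-step transition kernel. Fix a time $t$ and condition on $\Y_{[n]}(t-1) = y_{[n]}$. By the Markov property, the law of $\Y_{[n]}(t)$ depends only on $y_{[n]}$; write this conditional law as $Q^{(n)}_t(\cdot \given y_{[n]})$ on $[s]^n$. Exchangeability of $\Y$ forces $Q^{(n)}_t$ to be equivariant under the simultaneous action of $\sigma \in S_n$ on coordinates of the conditioning state and the target state. Consistency under subsampling (via the lack-of-interference identity stated in the excerpt) forces the family $\{Q^{(n)}_t\}_{n}$ to be a projective family: marginalizing $Q^{(m)}_t(\cdot \given y_{[m]})$ onto the first $n$ coordinates yields $Q^{(n)}_t(\cdot \given y_{[n]})$, for any extension $y_{[m]}$ of $y_{[n]}$. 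So for each fixed starting configuration we have an infinite exchangeable array of ``next states.''

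First I would apply de Finetti's theorem (or rather its conditional form) to this projective, exchangeable family. The subtlety is that the conditioning configuration $y_{[n]}$ is itself not a single point but varies with $n$; however, because the process is a Markov multi-state \emph{survival} process, once a coordinate hits $\flat$ it stays there, and the transitions out of each state $i$ must respect the transition graph $G$ (i.e.\ land in $\P_G$). The key reduction is this: group the units according to their current state. For units currently in state $i$, their next-state distributions form an exchangeable sequence (by exchangeability within that block, inherited from global exchangeability), and---critically---the joint law across \emph{all} units and \emph{all} source states must be mutually consistent. I would argue that there is a single latent random object, namely a random transition matrix $P_t \in \P_G$, such that conditionally on $P_t$, every unit $u$ transitions from its current state $i$ to $i'$ independently with probability $[P_t]_{i,i'}$. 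Concretely: condition on $P_t$; then de Finetti gives, for each source state $i$, a conditionally-i.i.d.\ structure with some directing measure; consistency of the overlapping exchangeable blocks (a unit in state $i$ at time $t-1$ could, in a longer trajectory, have been in a different state, and exchangeability must hold jointly) pins these directing measures down to the rows of a \emph{common} random matrix rather than independent per-row randomness. This yields $\Sigma_t := \mathrm{law}(P_t)$ on $\P_G$.

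Next I would establish time-homogeneity and independence across time. Time-homogeneity of the Markov process forces $\Sigma_t$ to be independent of $t$, so write $\Sigma := \Sigma_t$. For independence of the $P_t$ across $t$: the Markov property says $\Y_{[n]}(t)$ depends on the past only through $\Y_{[n]}(t-1)$; combined with the fact that the latent $P_t$ is (a function of a limit of) empirical transition frequencies among infinitely many units at step $t$---which are not measurable with respect to any finite history and get ``refreshed'' at each step---one concludes the sequence $(P_t)_{t \geq 1}$ is i.i.d.\ $\Sigma$. Finally, handle the initial condition: exchangeability of $\Y(0)$ on $[s]^{\Nat}$ is immediate, and $Y(u,0)\neq\flat$ is part of the definition, so $\Y(0)$ is drawn from an exchangeable law on $(\S\setminus\{\flat\})^{\Nat}$, matching the construction of $\Y^\star_\Sigma$. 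Assembling these pieces shows $\Y^\star_\Sigma$ is a version of $\Y$.

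I expect the main obstacle to be the step that forces the per-source-state directing measures to be the rows of one \emph{common} random matrix $P_t$, rather than jointly independent (or arbitrarily coupled) random probability vectors. The naive de Finetti argument applied block-by-block only gives a directing measure per block; promoting this to a single coherent random matrix requires carefully using consistency under subsampling across \emph{different} conditioning configurations---essentially, looking at a single unit whose state at time $t-1$ can be varied while the rest of the population is held in an arbitrary configuration, and using exchangeability plus lack of interference to show the ``randomness'' governing that unit's jump is shared across all starting states and all units. A clean way to organize this is to first prove the statement for the trajectory over two consecutive times $t-1, t$ restricted to $n$ units, obtain a random matrix $P^{(n)}_t$, check the $P^{(n)}_t$ are consistent as $n \to \infty$ by the subsampling hypothesis, and pass to the limit; then glue over $t$ using the Markov property. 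The remaining verifications (that the limiting matrix lies in $\P_G$, measurability, and that the constructed process genuinely reproduces all finite-dimensional distributions of $\Y$) are routine.
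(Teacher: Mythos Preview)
Your overall plan---reduce via the Markov property to the one-step kernel, exploit exchangeability and consistency to obtain a de Finetti structure, then glue over time---is correct and is the paper's strategy as well. The substantive difference lies exactly at what you flag as the ``main obstacle,'' and here the paper's device is worth knowing. Rather than working with arbitrary finite configurations $y_{[n]}$ and passing to a limit, the paper conditions on a single canonical configuration $\text{id}\in[s]^{\mathbb{N}}$ that already contains infinitely many units in \emph{each} state (encoded as the infinite matrix with every row equal to $(1,\ldots,s)$). The one-step transition law from $\text{id}$ is then a probability measure $\eta$ on $[s]^{\mathbb{N}\times s}$ that is \emph{column-wise exchangeable}: each of the $s$ columns (one per source state) may be permuted by its own permutation without changing the law. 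This is precisely de Finetti's partial-exchangeability setup, and the associated representation theorem yields in one stroke a single random $P\in\P_G$ whose $j$th row is the directing measure for column $j$; the law of $P$ is $\Sigma$. Consistency under subsampling then shows that $\eta$, acting on an arbitrary $y$, reproduces $\pr_n(\cdot\given y)$ for every $n$.

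Your proposed resolution of the obstacle---``looking at a single unit whose state at time $t-1$ can be varied while the rest of the population is held fixed'' and appealing to ``overlapping exchangeable blocks''---does not work as written. The blocks (units grouped by current state) do \emph{not} overlap, and the observation that a unit might have occupied a different state at some earlier time says nothing about how the per-block directing measures at the \emph{same} time $t$ are coupled. What couples them is the partial exchangeability of the one-step kernel started from a configuration that populates all states simultaneously, which is exactly what the identity-configuration trick delivers. A smaller point: your i.i.d.\ argument is framed backwards. You need not extract $P_t$ from $\Y$ as an empirical limit and then prove independence; instead one \emph{defines} $\Y^\star_\Sigma$ with $P_t\stackrel{\text{i.i.d.}}{\sim}\Sigma$ and verifies $\Y^\star_\Sigma\stackrel{d}{=}\Y$ by matching the one-step kernels, which the Markov property and time-homogeneity reduce to the single computation above.
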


\subsection{Continuous-time multi-state survival models}

In continuous-time, the Markov exchangeable
multi-state survival process is governed by a measure on transition
matrices, denoted~$\Sigma$, and a set of constants associated with the edge
set, denoted~${\bf c} = \{ c_{i,i^\prime} \, | \, (i,i^\prime) \in E\}$.
Unlike discrete-time, however, the
jumps occur at random times.  

Consider the $n$-restricted state space process. If $y_{[n]}$ is the
current state then the holding time in this state is exponentially
distributed with a rate parameter depending on the current
\emph{configuration} (see Section~\ref{section:seqdesc}). At a jump
time~$t$, one of two potential events can occur: (a) a \emph{single
  unit}~$u \in [n]$ experiences a transition from $Y(u,t-) = i$ to a
state~$i^\prime \neq i$, or (b) a \emph{subset of~$[n]$} (potentially
a singleton) experience a simultaneous transition. Given the
transition matrix~$P(t)$, these simultaneous transitions are
independent and identically distributed transitions according to~$P(t)
\in \P_G$. The transition matrix~$P(t)$ is obtained from a
measure~$\Sigma$ on $\P_G$ and a set of constants~${\bf c}$; unlike
the discrete-time setting, the measure~$\Sigma$ need not be
integrable. Let~$\Y^{\star}_{\Sigma,c}$ denote a continuous-time
process constructed by this procedure with probability
measure~$\Sigma$. 


\begin{theorem}[Continuous-time characterization] \normalfont
  \label{thm:cts_main}
  Let~${\bf Y} = (\Y (t), t \in \mathbb{R}^+ )$
  be a Markov, exchangeable multi-state survival
  process; and~$I_{s}$ be the $s \times s$ identity
  matrix.
  Then there exists a probability measure~$\Sigma$
  on~$\P_G$ satisfying
  \begin{equation}
    \label{eq:cts_char}
    \Sigma ( \{ I_s \} ) = 0  \text{ and }
    \int_{\P_G} (1-P_{\min} ) \Sigma (dP) < \infty, \quad
    P_{\min} = \min_{i \in [s]} P_{i,i}
  \end{equation}
  and constants~${\bf c} = \{ c_{i, i^\prime} \geq 0 \given 
  (i,i^\prime) \in E \}$
  such that $\Y_{\Sigma,c}^\star$ is a version of~$\Y$.
\end{theorem}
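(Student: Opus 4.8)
The plan is to reduce the continuous-time statement to the discrete-time characterization (Theorem~\ref{thm:discrete_main}) via a skeleton/embedding argument, and then to analyze the limit as the mesh of the skeleton tends to zero. First I would fix $n$ and work with the $n$-restricted process $\Y_{[n]}$. By the Markov and time-homogeneity assumptions, for each $h>0$ the discrete skeleton $(\Y_{[n]}(kh))_{k\in\Nat}$ is a discrete-time Markov, exchangeable, consistent multi-state survival process; Theorem~\ref{thm:discrete_main} thus yields a probability measure $\Sigma_h$ on $\P_G$ governing the one-step transitions at scale $h$. The semigroup/consistency property across scales ($2h$ versus $h$) forces a compatibility relation between $\Sigma_{2h}$ and the ``square'' of $\Sigma_h$ in the sense of the random-matrix product induced by the construction of $\Y^\star$. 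The goal is to show that as $h\downarrow 0$, the rescaled measures $h^{-1}\Sigma_h$ (restricted away from a neighborhood of $I_s$) converge vaguely to a limit measure $\Sigma$ on $\P_G\setminus\{I_s\}$, while the ``infinitesimal drift'' coming from the mass of $\Sigma_h$ near $I_s$ — i.e.\ the single-unit transition rates — converges to the constants ${\bf c}=\{c_{i,i'}\}$.

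Next I would establish the two regularity conditions in~\eqref{eq:cts_char}. The condition $\Sigma(\{I_s\})=0$ is automatic once $\Sigma$ is obtained as a limit of $h^{-1}\Sigma_h$ with the atom at $I_s$ removed (that atom is precisely what produces the ``no jump'' behavior and is absorbed into the holding-time rate, not into $\Sigma$). The integrability condition $\int_{\P_G}(1-P_{\min})\,\Sigma(dP)<\infty$ is the key finiteness statement: it says that the total rate at which \emph{some} unit is asked to move is finite, which is exactly what is needed for the $n$-restricted holding times to be honest exponential random variables with finite rate and for the process not to explode. I would derive it from the requirement that $\pr(\Y_{[n]}(h)=y_{[n]})\to 1$ as $h\to0$ at a rate that is $O(h)$ — a consequence of right-continuity / stochastic continuity of the Markov process — since $1-\pr(\Y_{[n]}(h)=\Y_{[n]}(0)\mid \Y_{[n]}(0)=y_{[n]})$ is, to first order in $h$, a sum over $i\in[s]$ of terms controlled by $\int(1-P_{i,i})\Sigma_h(dP)$ plus the single-unit rates; letting $n\to\infty$ and taking the worst coordinate $i$ gives the bound with $P_{\min}$. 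Note that $\Sigma$ itself need \emph{not} be finite — only this weighted integral must be — which matches the remark in the text that $\Sigma$ need not be integrable.

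Having produced $(\Sigma,{\bf c})$, the final step is to verify that $\Y^\star_{\Sigma,c}$ is genuinely a version of $\Y$. I would do this by checking that the two processes have the same finite-dimensional distributions, equivalently the same generator on the $n$-restricted state space for every $n$: the generator of $\Y^\star_{\Sigma,c}$ acting on functions of $y_{[n]}$ splits into a single-unit part built from ${\bf c}$ and a ``simultaneous transition'' part $\int_{\P_G}\big(\prod_{u} P_{y(u),\cdot}-\mathrm{id}\big)\Sigma(dP)$, and this must coincide with the limit of $h^{-1}(\text{transition kernel of }\Y_{[n]}(h)-\mathrm{id})$, which by construction is the limit of the corresponding object built from $\Sigma_h$. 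Consistency under subsampling guarantees these generators are compatible across $n$, so a standard Kolmogorov-extension argument assembles them into the full process; lack of interference (already noted in the text as a consequence of consistency) is what lets the $n$-restricted generator be written without reference to units outside $[n]$.

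The main obstacle I expect is the limiting argument in the first paragraph: controlling the family $\{h^{-1}\Sigma_h\}$ well enough to extract a vague limit and to identify the single-unit rates cleanly. The subtlety is that a single matrix $P$ close to $I_s$ contributes both to ``simultaneous'' transitions and, in the limit, to the individual rates ${\bf c}$, so one must carefully split $\P_G$ into a shrinking neighborhood of $I_s$ and its complement and show the split is asymptotically stable — i.e.\ that no mass ``leaks'' across the boundary in the limit. Tightness of $h^{-1}\Sigma_h$ restricted to $\{1-P_{\min}\ge\epsilon\}$ follows from the integrability bound derived in the second paragraph (uniformly in $h$), and the compatibility relation between scales pins down the limit as a well-defined Lévy-type intensity measure; verifying that this limit does not depend on whether one refines along $h,h/2,h/4,\dots$ or along a general sequence $h\to0$ is where most of the technical care will go.
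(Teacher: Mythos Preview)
Your route---discretize at mesh $h$, invoke Theorem~\ref{thm:discrete_main} to obtain $\Sigma_h$ on $\P_G$, and extract $(\Sigma,{\bf c})$ from the limit $h\downarrow 0$---is genuinely different from the paper's. The paper never passes through the discrete-time result. It works directly with the infinitesimal transition rates $Q_n(y,y')$ of the $n$-restricted chain, encodes them as a single column-wise exchangeable measure $\eta$ on the infinite array space $[s]^{\mathbb{N}\times s}$ via $\eta_n(A)=Q_{ns}(\id,A)$, builds $\Y^\star$ from a Poisson point process with intensity $dt\otimes\eta$, and then decomposes $\eta$ by a paintbox-style argument: $\eta$-almost every array $A$ has asymptotic row frequencies $|A|_s\in\P_G$; the restriction of $\eta$ to $\{|A|_s\neq I_s\}$ yields $\Sigma$, and the restriction to $\{|A|_s=I_s\}$ is supported on single-unit flips and yields the constants ${\bf c}$. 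The separation of dislocation from erosion is thus an \emph{infinite-population} statement about asymptotic frequencies, not a limit in $h$.

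This is exactly where your plan is most exposed. You propose to isolate ${\bf c}$ as the mass of $\Sigma_h$ ``near $I_s$,'' but $\Sigma$ itself can place infinite mass arbitrarily close to $I_s$ (only $\int(1-P_{\min})\Sigma(dP)$ is required to be finite), so no shrinking-neighborhood split on the finite-dimensional space $\P_G$ can by itself distinguish erosion from dislocation. Concretely: pure erosion with rates $c_{i,i'}$ produces a $\Sigma_h$ that is a \emph{point mass} at $e^{hC}\to I_s$, whereas a pure-$\Sigma$ process with heavy mass near $I_s$ produces a $\Sigma_h$ that is genuinely spread out yet also concentrating at $I_s$; both live ``near $I_s$'' for small $h$, and what separates them---deterministic versus random skeleton transition matrix---is exactly what the asymptotic-frequency argument captures. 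Your limiting program can likely be made to work, but it will require either tracking this deterministic-versus-random dichotomy in $\Sigma_h$ (not just its support), or passing to $n=\infty$ and recovering the paintbox decomposition anyway, at which point the detour through the skeleton buys nothing over the paper's direct generator argument.
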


Theorem~\ref{thm:cts_main} generalizes Proposition 4.3 
in~\citet{DempseyMSP} from the simple survival process setting.
The multi-state survival process contains many more 
well-known examples from health and epidemiology --
survival, illness-death, competing risk, and co-morbitity 
processes.  We highlight these examples in supplementary
section~\ref{section:examples}. Within the discussion of examples, we
extend Theorems~\ref{thm:discrete_main} and~\ref{thm:cts_main} to the
setting of recurrent events (see Corollaries~\ref{cor:discrete_recurrent}
and~\ref{cor:cts_recurrent}). 
Here, Theorem~\ref{thm:cts_main} is used to characterize
the continuous-time Markov chain~$\Y_{[n]}$ 
in terms of (1) exponential holding rates 
and (2) transition matrix at jump times. 
We start by defining the {\em characteristic index}
-- a set of functions~$\zeta: [s]^{n} \to \mathbb{R}^+$.

\begin{definition}[Configuration vector]
\label{def:char_index}
\normalfont
For~$y_{[n]} \in [s]^{n}$, define~$x_{[n]} \in [n]^s$ as the \emph{configuration vector} --
an $s$-vector summary of the number of units in
each state. For example, if $s = 2$,~$n = 4$, and
$y_{[4]} = (1, 2, 2, 1)$, then $x_{[4]} = (2,2)$; for~$y_{[4]} = (1,1,2,1)$ 
then~$x_{[4]} = (3,1)$.
\end{definition}

\begin{definition}[Characteristic index]
\label{def:char_index}
\normalfont
The {\em characteristic index}~$\zeta ( y_{[n]} )$ 
is defined as the normalizing constant to the 
integral representation obtained from Theorem~\ref{thm:cts_main}:
\[ 
\zeta_n ( y_{[n]} ) = \int_{\P_G} \left( 1 -
  \prod_{j=1}^s P_{j,j}^{x_j} \right) \Sigma (dP) + 
\sum_{i \in [s]} x_{i} \sum_{i^\prime : (i,i^\prime) \in E} c_{i,i^\prime}
\]
where~$x_{j}$ is the~$j$th component of the configuration vector
associated with~$y_{[n]}$, and the sum is set to $0$ when 
$\{ i^\prime \in V \text{ s.t. } (i,i^\prime) \in E \} = \emptyset$.
Condition~\eqref{eq:cts_char} implies the characteristic index
is finite for any~$y_{[n]} \in [s]^n$.
\end{definition}

For simple survival processes, the characteristic index defined here
simplifies to the characteristic index as defined
in~\citeauthor{DempseyMSP}~\citeyearpar{DempseyMSP}.
At a jump-time~$t$, the probability of transition from~$\Y_{[n]} (t-)$
to~$\Y_{[n]} (t)$ is
\begin{align*}
  q(\Y_{[n]} (t-)&, \Y_{[n]} (t)) = \frac{1}{\zeta_n ( \Y_{[n]} (t-))} \bigg [
  \int_{\P_G} \prod_{u \in [n]} P [ Y (u,t-) , Y (u,t) ] \Sigma (dP) \\
  &+ \delta (\# \{ u \in [n] \given Y (u,t-) \neq Y(u,t) \} = 1 )  
    \sum_{i^\prime : (i, i^\prime) \in E} c_{i,i^\prime} \, \delta (Y
    (u^\prime,t-) = i , Y (u^\prime,t) = i^\prime)  \bigg] \\
                 &=: \frac{\lambda ( \Y_{[n]} (t-), \Y_{[n]} (t))
                   }{\zeta_n ( \Y_{[n]} (t-))}
\end{align*}
where $P[i,i^\prime] = P_{i,i^\prime}$,~$\delta (\cdot)$ is the
indicator function,~$u^\prime$ is the single unit to experience a
transition, and~$\lambda(\cdot,\cdot)$ is the non-normalized transition
function.

\section{Discretization and rounding}
\label{section:discretization}
It has been argued that ``there may be no scientific reason 
to prefer a true continuous time model over a 
fine discretization''~\cite[p. 325]{Breto2009}. 
We tend to disagree with such a viewpoint; 
a basic and very important issue in 
multi-state survival analysis is the distinction 
between inherently discrete data 
(coming from intrinsically time-discrete phenomena) and 
grouped data (coming from rounding of intrinsically continuous data).
Theorems~\ref{thm:discrete_main} and~\ref{thm:cts_main} 
reinforce this distinction
as we see distinct characterizations of 
discrete and continuous-time processes.
One example of the former in survival analysis is time to get pregnant, 
which should be measured in menstrual cycles.
The latter represents the majority of multi-state survival data.
For this reason, we focus the remainder of this paper on the
continuous-time case.  

\section{Description of continuous-time process}
\label{section:desc_cts}

\subsection{Holding times}
Let~$t$ be a jump time at which the state vector~$\Y_{[n]}$ 
transitions into state~$y_{[n]} \in [s]^{n}$. 
To each such state~$y_{[n]}$, we associate an independent exponentially
distributed holding time. By choosing the rate functions in an
appropriate way, the Markov multi-state survival process can be made
both consistent under subsampling, and exchangeable under permutation
of particles.

\begin{corollary} \normalfont
\label{prop:consrates}
A set of rate functions $\{ \tau_n: s^{[n]} \to \mathbb{R}^+ \}_{n=1}^\infty$, 
is consistent if it is {\em proportional to}
the characteristic index $\tau_n (y) \propto \zeta_n (y)$.
\end{corollary}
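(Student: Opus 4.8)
The plan is to exhibit the $n$-restricted process of $\Y^{\star}_{\Sigma,c}$ as the lumped chain of $\Y_{[n+1]}$ for the partition of $[s]^{n+1}$ that forgets the last coordinate, and to show that this lumped chain has exactly the generator used to construct $\Y^{\star}_{\Sigma,c}$ on $[n]$. Since restriction operators compose, $R_{m,n}=R_{n+1,n}\circ R_{n+2,n+1}\circ\cdots\circ R_{m,m-1}$, it is enough to treat the one-step case $m=n+1$ and then induct on $m$. I read $\tau_n\propto\zeta_n$ as the requirement that a single proportionality constant serve for all $n$ at once -- a per-$n$ constant manifestly cannot give consistency -- and, rescaling the time axis by that constant, I may assume $\tau_n=\zeta_n$ for every $n$. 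Because $|\S|<\infty$ and~\eqref{eq:cts_char} makes each $\zeta_n$ finite, every $\Y_{[n]}$ is a well-defined non-explosive time-homogeneous pure-jump Markov chain, so it suffices to match generators and initial laws.

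Fix $n$ and write a state of $\Y_{[n+1]}$ as $w=(y_{[n]},j)$ with $y_{[n]}\in[s]^{n}$ and $j\in[s]$. Taking $\tau_{n+1}=\zeta_{n+1}$ makes the holding rate cancel the normalizer in $q$, so the off-diagonal generator entries of $\Y_{[n+1]}$ are $Q_{n+1}(w,z)=\tau_{n+1}(w)\,q(w,z)=\lambda_{n+1}(w,z)$, the non-normalized transition function of Theorem~\ref{thm:cts_main}. Partition $[s]^{n+1}$ into the classes $C_{y_{[n]}}=\{(y_{[n]},j):j\in[s]\}$ indexed by the first $n$ coordinates; then $R_{n+1,n}(\Y_{[n+1]})$ is the associated lumped process. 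First I would check the standard strong-lumpability criterion for continuous-time chains: for representatives $w=(y_{[n]},j)$ and $w'=(y_{[n]},j')$ of one class and any other class $C_{y'_{[n]}}$,
\[
\sum_{z\in C_{y'_{[n]}}}Q_{n+1}(w,z)
=\sum_{y'_{n+1}\in[s]}\lambda_{n+1}\big((y_{[n]},j),\,(y'_{[n]},y'_{n+1})\big)
=\lambda_n(y_{[n]},y'_{[n]}),
\]
and in particular this is independent of $j$. The $\Sigma$-part of the second equality uses only that every $P\in\P_G$ has row sums one, so $\sum_{y'_{n+1}}P[j,y'_{n+1}]=1$ removes unit $n+1$ from the product; the ${\bf c}$-part is a short case check -- since $y'_{[n]}\neq y_{[n]}$ at least one unit of $[n]$ moves, so if exactly one unit of $[n+1]$ moves it lies in $[n]$ and $y'_{n+1}=j$ is forced, recovering the single-unit term of $\lambda_n$, whereas if two or more units of $[n]$ move both sides vanish.

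Granting the criterion, $R_{n+1,n}(\Y_{[n+1]})$ is a time-homogeneous pure-jump Markov chain with off-diagonal generator $Q^{\mathrm{lump}}(y_{[n]},y'_{[n]})=\lambda_n(y_{[n]},y'_{[n]})=\tau_n(y_{[n]})\,q(y_{[n]},y'_{[n]})$, exactly the generator of $\Y^{\star}_{\Sigma,c}$ on $[n]$; since the initial law on $[s]^{n+1}$ is the $[n+1]$-marginal of a fixed exchangeable law on $[s]^{\Nat}$, its $[n]$-restriction is the corresponding initial law, so the two chains agree in law. This is consistency under subsampling, and exchangeability under relabeling then follows at once because $\tau_n=\zeta_n$ depends on $y_{[n]}$ only through its configuration vector and $\lambda_n$ is symmetric in the units. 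I expect the main obstacle to be the bookkeeping behind the lumpability identity: one must see that (i) the choice $\tau_{n+1}\propto\zeta_{n+1}$ is precisely what cancels the characteristic-index normalizer in $q$, so that class-exit rates become the additive quantity $\lambda_{n+1}$ rather than a ratio, and (ii) summing $\lambda_{n+1}$ over the dropped coordinate returns $\lambda_n$ for the $\Sigma$-term and the ${\bf c}$-term alike, so the class-to-class rates genuinely do not see the discarded units -- which is also why the proportionality constant must be common to all $n$.
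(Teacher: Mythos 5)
Your proof is correct, but it takes a more explicit and elementary route than the paper, which disposes of the corollary in one line by appealing to Theorem~\ref{thm:cts_main}: there, consistency of $\Y^{\star}_{\Sigma,c}$ is obtained ``for free'' from the global construction, namely a single Poisson point process on $\mathbb{R}^{+}\times[s]^{\mathbb{N}\otimes k}$ with column-exchangeable intensity $dt\otimes\eta_{\Sigma,c}$, all finite restrictions being read off from that one object, with $\zeta_n(y_{[n]})=\eta(\{A: A^{[n]}(y_{[n]})\neq y_{[n]}\})$ appearing as the restricted total mass. You instead never leave the finite-dimensional world: you verify Dynkin's strong-lumpability criterion for the generator of $\Y_{[n+1]}$ under the partition that forgets the last coordinate, and the two key identities you isolate --- that $\tau_{n+1}\propto\zeta_{n+1}$ cancels the normalizer so that class-exit rates become the additive quantity $\lambda_{n+1}$, and that $\sum_{y'_{n+1}}P[j,y'_{n+1}]=1$ collapses $\lambda_{n+1}$ to $\lambda_n$ (with the single-unit ${\bf c}$-term surviving only through $y'_{n+1}=j$) --- are precisely the finite-dimensional shadow of the projectivity of $\eta$ used in the paper's proof of the theorem. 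What your approach buys is self-containedness and a sharp localization of where the hypothesis is used, including the correct observation that the proportionality constant must be common to all $n$; what the paper's approach buys is that consistency, exchangeability, and the Markov property all follow simultaneously from one construction without any lumpability bookkeeping. The only points worth tightening are cosmetic: justify the interchange of the sum over $y'_{n+1}$ with the integral against $\Sigma$ (Tonelli, all integrands nonnegative, finiteness from $\prod_u P[y_u,y'_u]\le 1-P_{\min}$ when $y'\neq y$), and state the continuous-time lumpability criterion you invoke (constant class-to-class exit sums over representatives, for classes other than the current one) so that the within-class motion of the discarded unit is visibly irrelevant to the lumped law.
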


\noindent Corollary~\ref{prop:consrates} follows from
Theorem~\ref{thm:cts_main}, and shows how the exponential holding rate
relates to the characteristic index; in particular, the difference is
a proportionality constant~$\nu$ which depends on choice of time-scale.

\subsection{Density function}
Since the evolution of the process $\Y_{[n]}$ is Markovian, it is a
straightforward exercise to give an expression for the probability
density function for any specific temporal trajectory.
The probability that the first transition occurs in the interval $dt_1$
with transition from~$\Y_{[n]} (t_1-)$ to $\Y_{[n]} (t_1)$ is
\begin{align*}
&\nu \zeta_n (\Y_{[n]} (t_1-)) \exp \left(- \nu 
\zeta_n (\Y_{[n]} (t_1-)) t_1  \right) dt_1 
\times  q (\Y_{[n]} (t_1-), \Y_{[n]} (t_1)) \\
=&\exp\left(- \nu \zeta_n (\Y_{[n]} (t_1-)) t_1  \right) dt_1 
\times  \lambda (\Y_{[n]} (t_1-), \Y_{[n]} (t_1))
\end{align*}
where $\lambda(\cdot,\cdot)$ is the non-normalized transition probabilities.
Continuing in this way, it can be seen that the joint density for
a particular temporal trajectory $\Y_{[n]}$ consisting of $k$ transitions with
transition times $0 < t_1 < \cdots < t_k$ is
\begin{equation}\label{density}
\exp\Bigl(- \int_0^\infty \nu \zeta_n ( \Y_{[n]} (s) ) \, ds \Bigr)
\prod_{j=1}^k \lambda\bigl(\Y_{[n]} (t_j -), \Y_{[n]} (t_j) \bigr).
\end{equation}
The number of transitions~$k$ is a random variable whose
distribution is determined by (\ref{density}), and hence 
by~$\zeta_n$. Note that with probability one~$T_{i} < \infty$
for each~$i \in [n]$; therefore, the transition at time~$t_k$ 
is such that~$Y_{[n]} (t_k) = (\flat,\ldots,\flat) := \bar \flat$ (i.e., all units
have failed by time~$t_k$). By definition~$\zeta_{[n]} ( \bar \flat )
= 0$ and so the integral in equation~\eqref{density} is finite with 
probability one.

The $n$-dimensional joint distribution is continuous 
in the sense that it has no fixed atoms.
For $n\ge 2$, it is not continuous with respect to 
Lebesgue measure in $\mathbb{R}^n$.
The one-dimensional marginal process is a Markov multi-state survival
process with holding rates $\{ \zeta_1(i) \given i \in [s] \}$,
assuming a valid starting state.  For example, in a survival process
the only valid starting state is ``Alive''.

Although the argument leading to (\ref{density}) did not explicitly
consider censoring, the density function has been expressed in integral
form so that censoring is accommodated correctly.
The pattern of censoring affects the evolution of $\Y_{[n]}$,
and thus affects the integral,
but the product involves only transitions and transition times.
So long as the censoring mechanism is \emph{exchangeability
  preserving}~\citep{DempseyMSP}, inference based on the joint density
given by equation~\eqref{density} is possible. Both simple type I
censoring and independent censoring mechanism preserve exchangeability.

\subsection{Sequential description}
\label{section:seqdesc}
Kolmogorov consistency permits ease of computation for the trajectory
of a new unit~$u^\prime = n+1$ given trajectories for the first~$n$
units~$\Y_{[n]} = \y_{[n]}$. 
The conditional distribution is best described 
via a set of paired conditional hazard 
measures,~$\{ \Lambda^{(c)}_{i,i^\prime}, \Lambda^{(a)}_{i,i^\prime}
\}_{(i,i^\prime) \in E}$.
For~$(i, i^\prime) \in E$, the pair consists of a continuous
component~$\Lambda^{(c)}_{i,i^\prime}$ in addition to an atomic
measure~$\Lambda^{(a)}_{i, i^\prime}$ with positive mass only at
the observed transition times of~$\y_{[n]}$.

For a time~$t$, not a transition time of~$\y_{[n]}$,
consider the new unit transitioning from state~$i$ to $i^\prime$.
Then the continuous component~$\Lambda^{(c)}_{i,i^\prime}$ has 
hazard and cumulative hazard
\begin{align*}
  h_{i,i^\prime}(t) 
  &=  \zeta_n ( \y_{[n+1]} (t-) ) - \zeta_n ( \y_{[n+1]} (t) )  
    := (\Delta \zeta) ( \y_{[n+1]} (t)) \\
  H_{i,i^\prime} (t) &= \int_0^{t} h_{i,i^\prime} (s) ds
\end{align*}
Note that $\zeta_{n+1}(\y_{[n+1]} (t))$ is piecewise constant, 
so the integral is trivial to compute, 
but censoring implies that it is not necessarily constant
between successive failures.

Now let $t$ be an observed transition time (i.e.,
$\y_{[n]}(t-) \neq \y_{[n]}(t)$) and consider the atomic
measure~$\Lambda_{i,i^\prime}^{(a)}$ associated with switching from
state $i$ to $i^\prime$.
At each such point, the conditional hazards has an atom with finite mass
\[
\Lambda^{(a)}_{i,i^\prime}(\{t\}) = \log\frac{\zeta_n(\y_{[n]}(t-))\, q(\y_{[n]}(t-),\y_{[n]}(t))}
{\zeta_{n+1}(\y_{[n+1]}(t-))\, q(\y_{[n+1]}(t-),\y_{[n+1]}(t))},
\]
or, on the probability scale,
\[
\exp(-\Lambda_{i,i^{\prime}} (\{t\})) = \frac{\lambda(\y_{[n+1]}(t-),\y_{[n+1]}(t))}{\lambda(\y_{[n]}(t-),\y_{[n]}(t))}.
\]

The above calculations define the conditional holding time of the new
unit after it enters state~$i$ at time~$t$ (i.e.,~~$Y(n+1,t-) \neq i$ and
$Y(n+1,t) = i$) conditional on~$\Y_{[n]} = \y_{[n]}$. 
For~$s > 0$, let~$\{ t_j \}_{j=1}^L$ denote the observed transition
times of~$y_{[n]}$ within the time-window~$(t, t+s]$. 
Then, the probability that the unit stays in state~$i$ for
\emph{at least} $s>0$ time points is
\[
  \exp \left( -\sum_{i^\prime : (i, i^\prime) \in E} \nu (
    H_{i,i^\prime} (t+s) - H_{i,i^\prime} (t) )
  \right) \cdot 
  \prod_{j=1}^L \exp \left( - \sum_{i^\prime : (i, i^\prime) \in E}
    \Lambda^{(a)}_{i,i^{\prime}} ( \{ t_j \} ) \right).
\]
This conditional probability is used to construct the
Markov Chain Monte Carlo sampling procedure in
Section~\ref{section:estimation}.
For any exchangeable, Markov multi-state survival process with
absorbing state~$\{ \flat_c \}$
such that~$Y_u (0) \not \in  \{ \flat_c \}$ with probability one,
$\zeta_1 (i) > 0$ for all $i \in [s] 
\backslash \{ \flat_c \}$ implies that
the continuous components ($\Lambda^{(c)}_{i,i^\prime}$ for $i \not \in \{
\flat_c \}$) have infinite total mass, so the time until reaching an
absorbing state is finite with probability one.  
This implies, for instance, that in the Aalen comorbidities
study (see Example~\ref{example:multi_risk} in
Section~\ref{section:examples}) that the process will terminate in
death with probability one.

Although the above conditional hazards look complex, it is not
difficult to generate the transition times sequentially for processes
whose characteristic index admits a simple expression for the above
expressions. Right censoring is automatically accommodated by the
integral in the continuous component, so the observed trajectory
$\Y_{[n]}$ may be incomplete.  Below we introduce
the~\emph{self-similar harmonic process} -- a multi-state survival process
which admits such simple expressions. We use this process as a
building block for more complex models in further sections.

\subsection{Self-similar harmonic process}
\label{section:ssharmonic}

Theorem~\ref{thm:cts_main} proves tied failures 
are an intrinsic aspect of Markov multi-state survival processes.
As stated in Section~\ref{section:discretization}, grouped data
usually are the result of rounding of intrisically continous data.
For these processes to be useful in biomedical applications, it is
essential that the model should not be sensitive to rounding.
Sensitivity to rounding is addressed by restricting attention to
processes whose conditional distributions are \emph{weakly
  continuous}, i.e., a small perturbation of the transition times
gives rise to a small perturbation of the conditional distribution.

\citeauthor{DempseyMSP}~\citeyearpar{DempseyMSP} originally studied
this question in the context of exchangeable, Markov survival
processes. In particular, it is shown that the~\emph{harmonic process}
is the only Markov survival process with weakly-continuous conditional
distributions.
Here, we extend the harmonic process to a multi-state survival process
by associating with each edge~$(i,j) \in E$ an independent harmonic
process with parameters~$(\nu_{i,j}, \rho_{i,j})$.
For~$(i,j) \in E$ let~$t_1^{(i,j)} < \ldots < t_{k(i,j)}^{(i,j)}$ denote
the unique observed transition times from~$i$ to~$j$
for~$\Y_{[n]}$ and let~~$\Y_{[n]}^{\sharp}(t; i) = \# \{ u \in [n]
\text{ s.t. } Y_{u} (t) = i \}$; then the continuous component of the hazard
is given by:
\[
  H_{i,j} (t) = \sum_{l: t_l^{(i,j)} \le t} \nu_{i,j} 
  \frac{t^{(i,j)}_{l} -
    t^{(i,j)}_{l-1}}{\Y_{[n]}^{\sharp}(t^{(i,j)}_{l-1}; i) + \rho_{i,j}} 
  + \nu_{i,j}\frac{t - t_{m}^{(i,j)}}{\Y_{[n]}^{\sharp}(t_m^{(i,j)}; i)+ \rho_{i,j}}.
\]
where the sum runs over transition times~$t_l^{(i,j)} \le t$,
and $t_m^{(i,j)}$ is the last such event. The discrete component is a
product over transition times 
\begin{equation}\label{kaplan-meier}
\prod_{l : t_l^{(i,j)} \le t} 
\frac{\Y_{[n]}^{\sharp}(t; i)+\rho_{i,j}}{\Y_{[n]}^{\sharp}(t-; i)+\rho_{i,j}}.
\end{equation}
For small $\{ \rho_{i,j} \}_{(i,j) \in E}$, the combined discrete
components above are essentially the same as the right-continuous
version of the Aalen-Johansen estimator.

We call this process the {\em self-similar harmonic process with
  transition graph~$G$}.
The associated measure~$\Sigma$ on $\P_{G}$ is
\[
\Sigma (dP) = {\delta} [ \# \{ p_{i,j} > 0, (i,j) \in E \} = 1 ]
\, \nu_{\star} (1-p_\star)^{-1} p_{\star}^{\rho_\star - 1} dp_{\star}
\]
where~$p_\star$ is the single non-zero, off-diagonal entry,
$\delta[\cdot]$ is the indicator function, and~$(\nu_{\star},
\rho_{\star})$ are the associated parameters.

While the self-similar harmonic process has strong appeal for use in
applied work, we show below that it is not a universally optimal
choice. The strong assumption embedded in the self-similar harmonic
process is that the transition processes are independent across
edges~$(i,i^\prime) \in E$. 
This implies that at each transition time only transitions along
a single edge $(i,i^\prime) \in E$ are possible. 
We argue below that while this may make sense in specific instances,
additional care is needed in writing down appropriate models for
multi-state survival processes in general. 


\section{Composable multi-state survival models}\label{section:nested}

We now discuss constraints on the multi-state survival models based on 
decompositions of the state-space~$[s]$.
We use the bidirectional illness-death process as an illustrative example.
Recall this process has three states,~\{Healthy, Ill, Dead\} (i.e,
$\{1,2,3\}$); see Example \ref{example:idp} for further details. The
state ``Dead'' is unique and distinct from the
states ``Healthy'' and ``Ill''.  Indeed, the latter two states require
the individual to be categorized more broadly as alive, and are thus
refinements of this more general state (i.e. ``Alive''). 
Suppose the labels ``Healthy'' and ``Ill''  were uninformative with
respect to failure transitions. Then the refinement is immaterial, and
the transition rules should collapse to the transition rule for an
exchangeable, Markov survival process.

The above discussion leads to two natural constraints:
(1) the state ``Dead''~(State 3) is unique and distinct from the other
states,  and (2) the states ``Healthy''~(State 1) and ``Ill''  (State
2) should be considered \emph{partially
  exchangeable}~\citep{Finetti1972}.   To satisfy this,
we require particular constraints on the measure~$\Sigma$ on $3 \times
3$ transition matrices.  First, the measure must only take positive mass
on one of two sets of transition matrices: (a) $P$ with off-diagonal
positive mass only in entries $(1,3)$ and/or $(2,3)$, or (b) $P$ with
off-diagonal positive mass only in entries $(1,2)$ and/or $(2,1)$. The
first set are transition matrices representing transitions from
``Healthy'' or ``Ill'' to ``Dead''. The second are transitions between
``Healthy'' and ``Ill'' or vice versa.
The partition~$B = \{ B_1, B_2 \} = \{ \{1,2\}, \{3\} \}$ splits the
space into two disjoint sets.  Within each block~$B_i$ the states are
partially exchangeable.  We say that the bidirectional illness-death
process is thus relatively partially exchangeable with respect to the
partition~$B$.

Let~$(n_1, n_2)$ denote the number of individuals in states
``Healthy'' and ``Ill'' directly preceding
a transition of type (a). 
Then the probability that $d_1 \leq n_1$ and $d_2 \leq n_2$
individuals respectively transition to the state ``Dead'' is
proportional to~$\int p_{1,1}^{n_1 - d_1} p_{1,3}^{d_1} p_{2,2}^{n_1 - d_1}
p_{2,3}^{d_2} \tilde{\Sigma}(dP)$, where~$\tilde{\Sigma}(dP)$ is the
measure restricted to type (a) transition matrices.  That is,~$\tilde
\Sigma$ puts positive mass on transition matrices~$P$ such
that~$P_{1,2} = P_{2,1} = 0$ so $P_{k,3} = 1-P_{k,3}$ for $k = 1,2$,
and~$P_{3,3} = 1$ (i.e., ``Dead'' is an absorbing state).
In this paper, we focus on the following choice of the restricted
measure:
\[
\tilde{\Sigma}(dP) = \nu_{1,1} \cdot P_{1,1}^{\rho_{1,1} - 1}
(1-P_{1,1})^{-1} \delta( P_{2,2}^{\gamma} = P_{1,1} ) dP_{1,1} dP_{2,2}.
\]
A similar formula exists for transitions of type (b).
The choice corresponds to a proportional model on the
logarithmic scale. It strongly links~$P_{1,1}$ and~$P_{2,2}$ with
baseline measure equivalent to that for a harmonic process.
In supplementary section~\ref{app:loglinear}, we provide more details
on the connection to the proportional conditional hazards model. 
We now consider extending this approach to construction of the
measure~$\Sigma$ for any multi-state survival process.

\subsection{Composable multi-state survival process}
\label{section:composable}

We now generalize the above by introducing $B$-composable processes.

\begin{definition} \normalfont
\label{defn:composable}
A multi-state survival process is \emph{$B$-composable} if
there exists a partition $B=(B_1, \ldots, B_k)$ of the
state-space~$[s]$ such that elements within block~$B_i$ are  
\emph{partially exchangeable} with respect to transition graph~$G$.  
\end{definition}

In other words, a $B$-composable process is any process that is
relatively partially exchangeable with respect to $B$.
For the bi-directional illness-death process (ex.~\ref{example:idp}),
$B = (\{ 1, 2 \}, \{3\})$. 
For comorbidities  (ex.~\ref{example:competing_risk}), $B$ partitions
the risk processes. Two risk processes in the same partition may
represent the same underlying process.
For the competing risks (ex.~\ref{example:competing_risk}), $B =
(\{ 1\}, B_2, \ldots, B_k)$ where~$(B_2,\ldots, B_k)$ partition the
absorbing states and the single state, ``Alive'', is distinct which
implies $B_1 = \{1\}$.
Every process is composable via the degenerate partition where
each~$B_j$ is a singleton and~$k = s$. The partition accounts for
absorbing states,~$i \in \{ \flat_c \}$, by satisfying~$B(i^\prime)
\cap B(i) = \emptyset$ for~$i^\prime \not \in \{ \flat_c \}$, where
$B(j)$ is the component of~$B$ to which state~$j \in [s]$ belongs.

If~$\Y$ is $B^\prime$-composable and $B^\prime$ is a refinement of the partition
$B$, then $\Y$ is also $B$-composable. To avoid confusion, from here on, when 
we say $\Y$ is $B$-composable, we assume there does not exist a
refinement~$B^\prime$ such that $\Y$ is also $B^\prime$-composable.
Definition~\ref{defn:composable} is similar in spirit to that of
\citeauthor{Schweder1970}~\citeyear{Schweder1970}---both aim to
formalize the notion that state changes in the process~$\Y$ are due to
changes in different components. 

\subsection{Choice of measure~$\Sigma$ for a $B$-composable process}

Here, we construct an appropriate measure~$\Sigma$ for a
$B$-composable, Markov, exchangeable multi-state survival process.
The measure will take positive mass only on transitions from states
within~$B_j$ to states within~$B_{j^\prime}$ for a single choice of~$j, j^\prime \in
\{1,\ldots, k \} := [k]$ indexing components of the partition~$B$.
For each component~$B_j$, let~$i(j) \in [s]$ denote a
\emph{representative state}. 
Then, for~$j, j^\prime \in [k]$, define the restricted measure on
transitions from states in~$B_j$ to states in~$B_{j^\prime}$, $\tilde
\Sigma_{j j^\prime} (dP)$, by
\begin{align}
  &\nu_{j, j^\prime} \, 
  P_{i(j), i (j)}^{\rho_{j,j^\prime}
    - 1} \left( 1- P_{i(j), i (j)} \right)^{-1} \label{eq:part1a} \\
  \prod_{l \in B_j \backslash i(j)} 
  &\delta \left[ P_{l,l}^{\gamma_{l,j^\prime}} = P_{i(j), i(j^\prime)} \right] dP_{i(j),
    i(j)} \label{eq:part1b} \\
  \prod_{l \in B_j } &\prod_{m \in B_{j^\prime} \, : \, (l,m) \in E}
    \delta[ y_{l,m} = \alpha_{l,m} ] dy_{l,m} \label{eq:part2}
\end{align}
where~$\gamma_{l, j^\prime} > 0$, $\gamma_{i(j), j^\prime} = 1$,
$\alpha_{l,m} \in [0,1]$, $P_{l,m} \in [0,1]$ and $y_{l,m} \in [0,1]$
such that~$\sum_{m \in B_{j^\prime} \, : \, (l,m) \in E} y_{l,m} = 1$.
Here, the assumption is~$P_{l,m} = (1-P_{l,l}) \cdot y_{l,m}$ for $l
\neq m$, and~$P_{l,l} = P_{i(j), i(j)}^{\gamma_{l,j^\prime}}$. 
Lines~\eqref{eq:part1a} and \eqref{eq:part1b} build the general
measure from a baseline harmonic measure and the assumption of
proportionality on the logarithmic scale for~$P_{l,l}$, where the
proportionality constant depends on $l \in B_j$.
Note~$\gamma_{i(j), j^\prime}$ is set to 1 by design, and so the 
parameters measure risk relative to the chosen \emph{representative
  state}. 
Line~\eqref{eq:part2} addresses the fact that a single state~$l \in
B_j$ can transition to multiple states in $B_{j^\prime}$.
Here, we choose an atomic measure located at $\alpha_{l,m}$, where the
parameters satisfy~$\sum_{m \in B_{j^\prime} \, : \, (l,m) \in E}
\alpha_{l,m} = 1$. These parameters lead to a multinomial
distribution across transitions.  We take this approach because it
leads to simple Gibbs updates of the underlying parameters.
In the example presented in section~\ref{section:example}, Mild CAV
status can transition to either No CAV or Severe CAV status.  This is
captured by presupposing a binomial distribution over these transitions.

Note, for~$B = (\{1\}, \ldots, \{s\})$, the above construction
yields the self-similar harmonic process. For~$B = (\{1,2\}, \{3\})$,
the above construction yields the bi-directional illness-death process
from Section~\ref{section:nested}.

\section{Parameter estimation}\label{section:estimation}


In practice, the patient's health status is typically measured at
recruitment ($t=0$), and regularly or intermittently thereafter while
the patient is under observation.
A complete observation on one patient $(\bft, Y[\bft], V, \Delta)$ consists of
the appointment schedule $\bft$, multi-state process measurements
$Y[\bft]$, and a failure/censoring time~$V$,
and a censoring indicator~$\Delta$. For censored records,~$\Delta = 1$
and the censoring time~$V$ is usually, but not necessarily, equal to
the date of the most recent appointment or end of study.

Here, we assume \emph{non-informative observation times}.  In
particular, given previous appointment times $\bft_{k-1} =
(t_1,\ldots, t_{k-1})$ and observation values $Y[\bft_{k-1}]$, the next
appointment time~$t_k$ satisfies
\begin{equation}
\label{eq:seqind}
t_k \indep Y \given (\bft_{k-1}, Y[\bft_{k-1}]).
\end{equation}
In other words, the conditional distribution of the random
interval~$t_k - t_{k-1}$ may depend on the observed history but not on
the subsequent health trajectory.  
This assumption combined with variational independence implies the
component of the likelihood associated with the appointment schedule
can be ignored for maximum likelihood estimation of parameters
associated with the multi-state survival process.
Under~\eqref{eq:seqind}, we propose a Markov Chain Metropolis Hastings
algorithm for posterior computation.

\subsection{Uniformization}

Uniformization~\citep{Jensen1953, HobolthStone2009} is a
well-known technique for generating sample paths for a given Markov
state-space process.
Take a time-homogeneous, continuous-time Markov process with $s \times
s$ transition matrix~$Q$ (i.e., a $s \times s$ matrix
satisfying~$Q_{ii} = -\sum_{j \neq i} Q_{ij}$ and~$Q_{ij} \geq 0$).
A sample path can be generated via Gillespie's algorithm:
(1) given in state~$i \in [s]$, generate a random exponentially
distributed holding time~$\tilde T$ with rate~$-Q_{ii}$;
(2) after~$\tilde T$ time steps, randomly transition to a new
state~$i^\prime \neq i \in [s]$ with probability~$- Q_{ij} / Q_{ii}$.
Alternatively, a sample path can be generated by uniformization with
parameter~$\Omega \geq \max_i |Q_{ii}|$:
(1) generate sequence of 'potential' transition times~$\bfw$ from a
homogeneous Poisson process with intensity~$\Omega$;
(2) run a discrete-time Markov chain with transition matrix~$B = (I +
\frac{Q}{\Omega})$ on the times~$\bfw$ to generate sequence of states~$\bfs$;
(3) construct~$\bar \bfw$ by removing elements of~$\bfw$ where the
process does not transition to a new state, and construct~$\bar \bfs$
to be the sequence of states at times~$\bar \bfw$. 
The sample path is represented by~$(\bar \bfw, \bar \bfs)$.
Both algorithms generate sample paths from the same Markov state-space
process; however, uniformization is highly adaptable to Markov Chain
Monte Carlo techniques.  See \cite{RaoTeh2013} for an excellent
discussion of uniformization-based MCMC for Markov processes.

\subsection{The MCMC Algorithm}
Direct application of existing Gibbs samplers based on uniformization
to our setting is problematic due to combinatorial growth in the
state-space and time-inhomogeneity of the conditional Markov process. 
Luckily, this approach can be adjusted using the sequential
description from Section~\ref{section:seqdesc} as guidance. In this
section, we derive an MCMC algorithm for posterior computations given
irregularly sampled multi-state survival data. 

\subsubsection{Prior specification and MCMC updates}
We start by specifying priors on the parameters $\{
\nu_{j, j^\prime}, \rho_{j, j^\prime}, \gamma_{l,j^\prime} \}_{l \in
  [s], j,j^\prime \in [k]}$, and $\{\alpha_{l,m} \}_{l, m \in [s]}$. We
write~$\Phi$ to denote the complete set of parameters.
We write $\bar \alpha$, $\bar \gamma$, et cetera to denote
each subset of parameters.
Recall for identifiability reasons~$\gamma_{i(j), j^\prime} = 1$
for~$j, j^\prime \in [k]$.
For all other pairs~$(l,j^\prime)$, we take the prior to be a
log-normal mean-zero distribution~$\log (\gamma_{l,j^\prime}) \sim
N(0, 1)$.  Weakly informative default priors are an
alternative~\cite{Gelman2008}; this corresponds to a log-Cauchy prior
with center 0 and scale parameter~$s_{l,m}$.  However, we saw minimal
differences in simulation studies for our current setting. 
The complete-data likelihood is non-conjugate, so we perform
Metropolis-Hastings updates, conditional on both the complete
trajectory $\Y_{[n]}$ and all other parameters. 

For~$\rho_{j,j^\prime}$, we follow~\cite{DempseyMSP}
and take~$\rho_{j,j^\prime} := \rho$ as a fixed tuning parameter.
Next, define $\lambda_{j,j^\prime} = \nu_{j,j^\prime} \cdot \rho$.
Scaling by~$\rho$ allows direct comparison of~$\lambda_{j,j^\prime}$
across various choices of the tuning parameter.
We choose a Gamma prior, $\lambda_{j,j^\prime} \sim \text{Gamma}
(\alpha, \beta)$, which is conjugate to the complete-data likelihood.
The posterior distribution conditional on the complete trajectory
and other parameters is 
\begin{equation}
\label{eq:samplambda}
\lambda_{j,j^\prime} \given \Y_{[n]}, \bar \gamma, \bar \alpha, \rho
\sim \text{Gamma} \left ( \alpha + k_{j,j^\prime}, \beta + \rho \cdot \int_0^\infty
\zeta_n (Y_{[n]} (s); \bar \gamma, \bar \alpha, \rho ) ds \right )
\end{equation}
where $k_{j,j^\prime}$ is the number of transition between blocks $j$
and $j^\prime$. 
Finally, for~$l \in [s]$ consider transitions to partition
$B_{j^\prime}$.  Index states in $B_{j^\prime}$ such that a transition
from $l$ is possible by $1,\ldots, m_{l, j^\prime}$.  Then the prior for $\bar
\alpha_{l,j^\prime} = (\alpha_{l,1,j^\prime},\ldots,
\alpha_{l,m_{l,j^\prime}, j^\prime})$  is a Dirichlet distribution
with parameters $\bar p_{l, j^\prime} = (p_{l,1,j^\prime}, \ldots,
p_{l, m_{l, j^\prime}, j^\prime} )$.  Then the posterior is conjugate
and
\begin{equation}
\label{eq:sampalpha}
\bar \alpha_{l, j^\prime} \given \Y_{[n]}, \bar \gamma, \bar \lambda, \rho
\sim \text{Dir} \left ( p_{l,1,j^\prime} + k_{l, 1, j^\prime}, \ldots,
p_{l,m_l,j^\prime} + k_{l, m_l, j^\prime} \right )
\end{equation}
where~$k_{l, m^\prime ,j^\prime}$ counts the number of transitions
from state~$l$ to state~$m^\prime$ in $\Y_{[n]}$.

\subsubsection{Conditional sampling patient trajectories}
\label{section:singleunitsample}

We now adapt uniformization to construct a Gibbs sampling of the 
patient trajectory given all other trajectories~$\Y_{-i} = \y_{-i}$,
parameters~$\Phi$, and the previous trajectory for this
patient~$\tilde \y_i$. For patient~$i$, we have observation~$(\bft_i,
\Y_i [ \bft_i ], V_i, \Delta_i)$. The appointment schedule~$\bft_i$ is
an ordered sequence~$0 \leq t_{i,0} < \ldots < t_{i,k_i} \leq V_i$ where
$t_{i,k_i}=V_i$ if and only if $\Delta_i = 1$.
By the Markov property, we can focus on generating the patient
trajectory for fixed parameters for each interval~$[t_{i,j},
t_{i,j+1}]$ separately.
Let~$t_{i,j} \leq \tilde t_{1,j} < \ldots < \tilde t_{L_j, j} \leq
t_{i,j+1}$ be the unique transition times for all other patients
within the interval $[t_{i,j}, t_{i,j+1}]$.
Denote this set of transition times~$\tilde \bft_{j}$. At each time~$t \in
[t_{i,j}, t_{i,j+1}]$, define~$\Omega_t = C \cdot \max_{(i,i^\prime)
  \in E} |\Lambda_{i, i^\prime}^{(c)} (t)|$ for some constant~$C > 1$.
By definition, this is a piecewise constant function that changes at
unique transition times or censoring times.
Next, sample a Poisson process~$\bfw  \subset [t_{i,j}, t_{i,j+1}]$
with piecewise-constant rate $R_t = \Omega_t - \Lambda^{(c)}_{\tilde
  y_i (t), \tilde y_i (t)} (t)$,
where $\Lambda^{(c)}_{\tilde y_i (t), \tilde y_i (t)} (t)$ is the 
continuous component of the conditional distribution.
Finally, let~$\bfu_i$ denote the transition times from the previous
trajectory~$\tilde \y_{i}$.
Then let $\mathcal{T} = \bfw \cup \tilde \bft \cup \bfu_i$ denote the
union of these times.
We can then apply the forward-filtering, backward sampling algorithm
with transition matrix $B_t = (I + \Lambda^{(c)} / \Omega_t)$ at times
$t \in W \cup (U_i \setminus \tilde t)$ and transition
matrix~$\Lambda^{(a)}$ at times~$\tilde t$.  This combines the
standard Gibbs sampling based on uniformization with the added
complexity of the atomic component associated with the conditional
distributions for exchangeable, Markov multi-state survival processes.
 
\subsubsection{MCMC procedure}

In each iteration, we proceed sequentially through patients, sampling
a latent multi-state path for patient~$\Y_i$ given all other latent
processes~$\Y_{-i} := \Y_{[n] \backslash i}$ as discussed in
Section~\ref{section:singleunitsample}.
Then, conditional on the multi-state process~$\Y_{[n]}$, we perform
Metropolis-Hastings updates for $\bar \gamma$ as the complete-data
likelihoods are easy to compute via equation~\eqref{density}. We end
each iteration by using equation~\eqref{eq:samplambda}
and~\eqref{eq:sampalpha} to sample from the posterior for~$\bar
\lambda$ and $\bar \alpha$ respectively.
One issue with this procedure is sequential sampling of the latent
process is computationally expensive and can significantly slow down
the MCMC procedure.  To address this issue, we also propose an
approximate MCMC algorithm in which the latent processes are only
updated every few iterations.  We see performance is not significantly
altered, and run time drops significantly.  
In the illustrative example below, we show $1000$ iterations of the
MCMC procedure where the latent process updates runs every $25$
iterations has similar perforance as the standard MCMC procedure with
significant decrease in overall runtime.

\section{MCMC procedure: a simulation example}
\label{app:gibbsexample}

We illustrate the sampling procedure on a bi-directional
illness-death model (example~\ref{example:idp}).  We set parameters as
follows: first, we assume that marginally a healthy participant
transitions to ill and dead after $2$ and $5$ years respectively (on
average); ill participants to both healthy and dead on average every
$3$ years respectively.  Both healthy and ill participants took on
average $3$ years to transition to failure. We assume a sample of~$N =
250$ individuals, with $150$ initially healthy and $100$ initially
unhealth, were generated. 

First, assume all transitions are observed.  Maximum likelihood
estimation is performed.  
Next, assume the state of each individual is observed annually, with
the transition time to failure observed.
Traceplots in Figure~\ref{fig:traceplots} suggest convergence of the
MCMC procedure after the first 100 iterations. The MCMC sampler gives
posteriors for the parameters. Table~\ref{tab:sim-ex} contains these
estimates.  We see good performance for $\nu_{12}$, $\gamma_{21}$, and
$\gamma_{22}$.  The posterior for $\nu_{11}$ reflects the observation
schedule; indeed, increasing the frequency of observation
significantly improves these posteriors.  In particular, under
complete observations, the posterior distributions are approximately
equal in distribution to the asymptotically normal confidence
intervals.

\begin{table}[!th]
\centering
\begin{tabular}{c c | c c c | c c c}
& & \multicolumn{3}{c}{Maximum Likelihood} 
  & \multicolumn{3}{c}{Posterior distribution} \\ 
Parameter & True Value & Estimate 
& Lower CI & Upper CI & Mean 
& 5\% Quantile & 95\% Quantile \\ \hline
$\nu_{11}$ & 0.50 & 0.53 & 0.43 & 0.64 & 0.15 & 0.09 & 0.23  \\
$\nu_{12}$ & 0.20 & 0.19 & 0.13 & 0.25 & 0.20 & 0.15 & 0.25 \\
$\gamma_{21}$ & 0.70 & 0.75 & 0.64 & 0.86 & 0.85 & 0.67 & 1.07 \\
$\gamma_{22}$ & 1.71 & 1.71 & 1.28 & 2.13 & 1.23 & 1.62 & 2.09 \\ \hline
\end{tabular}
\caption{Parameter estimation}
\label{tab:sim-ex}
\end{table}

Removing the first 100 iterations as burn-in, posterior distributions
are presented in Figure~\ref{fig:posterior}. 
Black curves are the MCMC sampler procedure; red curves are the
approximate MCMC sampling procedure with latent process updates every
$25$ iterations.  We see distributions are approximately equal in all
cases, with largest errors for $\nu_{11}$.  This supports the
aforementioned difficulty in estimating $\nu_{11}$ due intermittent
observations. This is a consequence of the observation schedule being
infrequent compared to the underlying stochastic dynamics.  Complete
observations (i.e., more frequent observations) significantly improves
estimation of $\nu_{11}$.

\begin{figure}[!th]
\centering
\begin{subfigure}{.5\textwidth}
  \centering
  \includegraphics[width=.9\linewidth]{./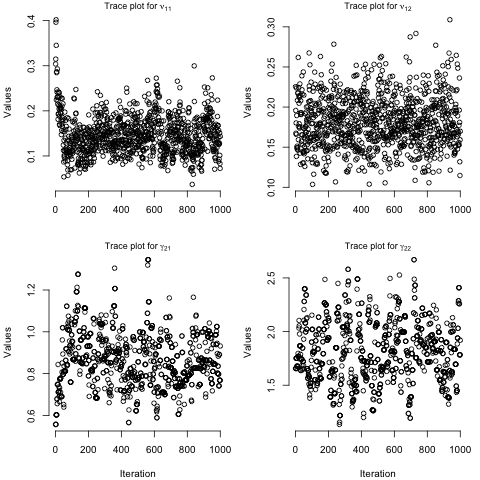}
  \caption{Traceplots for Gibbs sampler}
  \label{fig:traceplots}
\end{subfigure}%
\begin{subfigure}{.5\textwidth}
  \centering
  \includegraphics[width=.9\linewidth]{./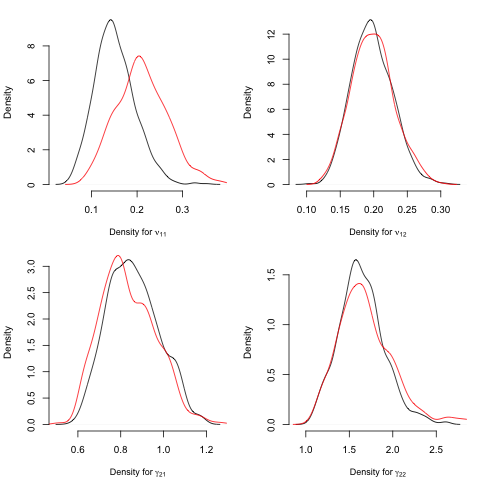}
  \caption{Approximate posterior density}
  \label{fig:posterior}
\end{subfigure}
\caption{MCMC traceplots and densities for simulation example}
\vspace{-5mm}
\end{figure}

Beyond posterior distributions for parameters, one is typically
interested in posterior distributions of the survival functions.
Note, there are two distinct sources of variation -- (1) intermittent
observations and (2) parameter uncertainty.  
The MCMC sampling procedure accounts for both, allowing for survival
functions to be constructed for each iteration of the MCMC sampler
using each iterations' latent process and parameters.
Figure~\ref{fig:survival} presents the pointwise median, $5$\%, and
$95$\% survival at every time since recruitment when the individual is
healthy and ill at baseline respectively.  The red curves are the true
survival function given healthy/ill at baseline.  We see that the
posteriors for survival functions are almost exactly equal, suggesting
intermittent observations did not significantly impact our ability to
predict survival of future patients. 

\begin{figure}[!th]
\centering
\includegraphics[width=.75\linewidth]{./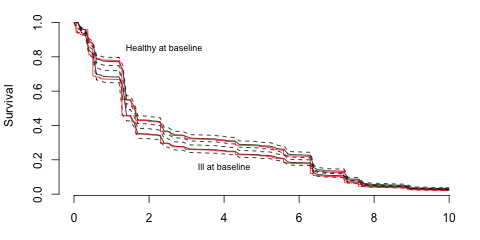}
\caption{Survival functions given baseline state; median (black), 5\%
  and 95\% quantiles (dotted black), and true
  survival function (red)}
\label{fig:survival}
\vspace{-5mm}
\end{figure}

\section{A worked example: Cardiac allograft vasculopathy (CAV) data}
\label{section:example}


To illustrate our methodology, we use data from angiographic
examinations of 622 heart transplant recipients
at Patworth Hopsital in the United Kingdom.
This data was downloaded from the R~library
\url{http://cran.r-project.org/web/packages/msm} maintained by
Christopher Jackson.
Cardiac allograft vasculopathy (CAV) is a deterioration of the
arterial walls.  Four states were defined for heart transplant
recipients: no CAV $(s=1)$, mild/moderate CAV $(s=2)$, severe CAV $(s = 3)$,
and dead $(s = 4)$.  
The transition graph is given by Figure~\ref{fig:cav_example}.
Yearly examinations occurred for up to 18 years following the
transplant. Mean follow-up time, however, is $5.9$ years.
Survival times are observed and/or censored, but CAV state was only
observed at appointment times prior to death/censoring.
For censored recovrds, the censoring time is assumed to be the final
appointment time.  Out of the $622$ patients, Only $192$ patients were
observed in state $2$  (Mild CAV) at any point during their follow-up.
Out of these $192$, $43$ of these patients were subsequently observed
in state $1$.  Only $92$ patients were observed in state $3$ (Severe
CAV) at any point during their follow-up.  Out of these $92$, $12$ of
these patients were subsequently observed in state $2$.  There was no
overlap in the these two patient subsets.

We set~$B = (\{1,2,3\}, \{ 4 \})$, and assume the underlying
process is a~$B$-composable, exchangeable, Markov multi-state
survival process.
Parameters are~$\{ \rho_{(j,j^\prime)}, \lambda_{(j,j^\prime)}
\}_{j,j^\prime=1,2}$, $\{ \gamma_{l,1}, \gamma_{l,2} \}_{l \in B_1}$,
and $\alpha_{2,1} \in [0,1]$.
For identifiability, we set~$\gamma_{1,1}$ and $\gamma_{1,2}$ 
equal to one.
As transitions from state $2$ to $1$ occur but should not occur too
often, the prior on $\alpha = \alpha_{2,1}$ is set to a Beta distribution
with parameters $2$ and $8$ respectively.  
For parameters~$\nu_{11}$ and $\nu_{12}$, the Gamma prior has
hyperparameters $1$ and $1$. 
Parameters~$(\gamma_{21}, \gamma_{22}, \gamma_{31}, \gamma_{32})$ have
independent, standard log-normal priors.
We use the approximate MCMC sampler to perform inference.

\begin{figure}[!th]
\centering
\includegraphics[width=0.7\textwidth]{./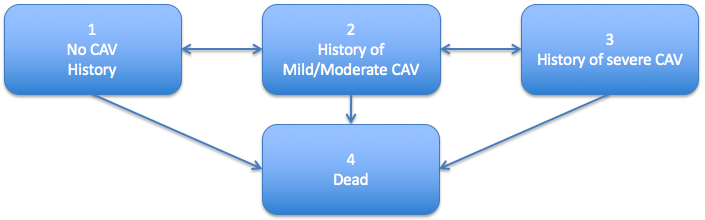}
\caption{Cardiac allograft vasculopathy (CAV) transition
  diagram. 
}
\label{fig:cav_example}
\end{figure}

Traceplots in Figure~\ref{fig:cav-traceplots} suggest convergence
after the first 100 iterations. The MCMC sampler gives posteriors for
the parameters. The posterior mean of $\nu_{11}$ is $0.75$ (i.e.,
marginal time until transition from state $1$ to state $2$ is $1.33$
years). Parameters $(\gamma_{21}, \gamma_{31})$ have posterior means
$(2.43, 0.65)$, translating into marginal holding times of $0.59$ 
and $2.02$ years respectively.  The posterior mean for $\alpha$ is
$0.38$, suggesting that the patient is a bit more likely to experience
progression of the CAV status than regression.
The posterior mean for $\nu_{12}$ is $0.52$ (i.e., marginally the
holding time in state $1$ until a transition to state $4$ is $1.92$
years). This suggests that in state $1$, disease progression is
slightly more likely than failure.
Parameters $(\gamma_{22}, \gamma_{32})$ have posterior means $(1.10, 1.81)$
respectively. This translates marginally into holding times of $1.76$ and
$1.10$ years respectively. For state $2$, the distribution of
$\gamma_{22}$ does overlap $1.0$ suggesting that failure transitions
from states $1$ and $2$ may occur at similar rates.

\begin{figure}[!h]
  \centering
  \begin{subfigure}{.5\textwidth}
    \centering
    \includegraphics[width=.95\linewidth]{./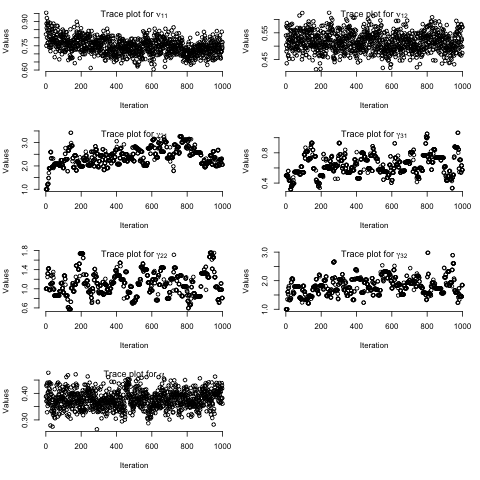}
    \caption{Traceplots for MCMC algorithm}
    \label{fig:cav-traceplots}
  \end{subfigure}%
  \begin{subfigure}{.5\textwidth}
    \centering
    \includegraphics[width=.95\linewidth]{./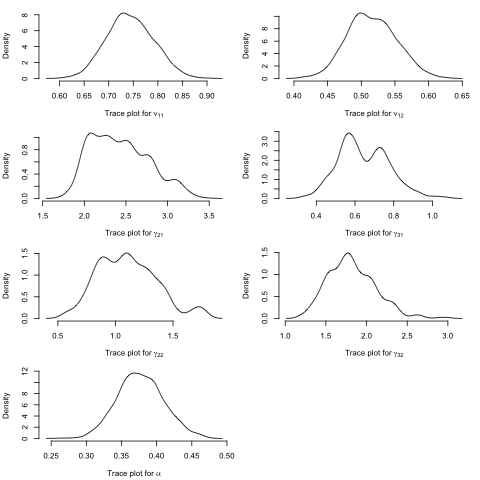}
    \caption{Approximate posterior densities}
    \label{fig:cav-quantile}
  \end{subfigure}
  \caption{MCMC traceplots and densities for CAV study}
  \vspace{-5mm}
\end{figure}

We next consider the posterior distributions for the survival
functions.
Figure~\ref{fig:cav-survival} plots median survival at each time $t$ over
all iterations of the MCMC sampler as well as the Kaplan-Meier
survival function estimator.
Note at baseline, all patients are in state $1$; therefore, the
Kaplan-Meier curve should be compared with the median survival curve
given the new patient is in state~$1$.  We see that the posterior
survival curve is significantly lower.  This reflects expected disease
progression since baseline; that is, a new patient in state $1$ at
baseline who is alive at time $t$ is likely to have seen a progression
in their disease state, leading to an increase in their current risk
of failure.  Under the multi-state survival process model, the
expected survival time from baseline given new patient is in state
$1$, $2$, and $3$ is $8.75$, $8.32$, and $7.32$ respectively. 
Under the Kaplan-Meier estimator, the expected survival time from
baseline is $9.66$.
We have included the 5\% and 95\% quantiles for the survival function
at each time $t$ when the patient is in state $3$ at baseline.
Figure~\ref{fig:cav-survival-at-five} plots median survival over all
iterations of the MCMC sampler given the user is alive at time $t=5$.
Under the multi-state survival process model, the expected
survival time from time $5$ given new patient is in state $1$, $2$,
and $3$ is $5.97$, $5.55$, and $5.04$ respectively. 
We again include the 5\% and 95\% quantiles for the survival function
at each time $t$ when the patient is in state $3$ at baseline.

\begin{figure}[!h]
  \centering
  \begin{subfigure}{.5\textwidth}
    \centering
    \includegraphics[width=.95\linewidth]{./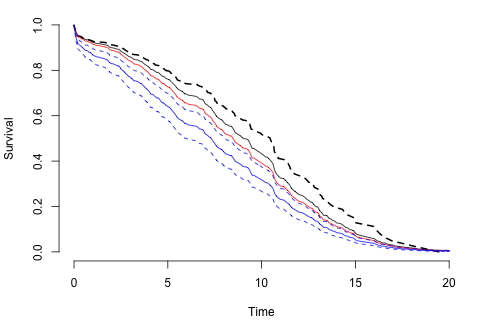}
    \caption{Survival functions at baseline}
    \label{fig:cav-survival}
  \end{subfigure}%
  \begin{subfigure}{.5\textwidth}
    \centering
    \includegraphics[width=.95\linewidth]{./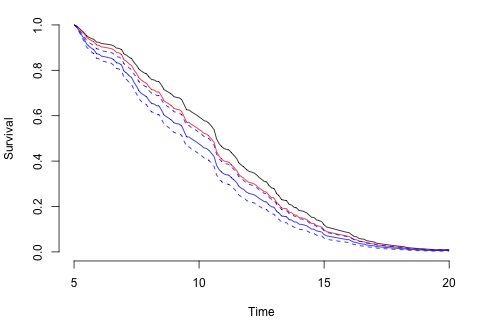}
    \caption{Survival functions at $t = 5$}
    \label{fig:cav-survival-at-five}
  \end{subfigure}
  \caption{Survival functions given ``No CAV''  (black),
    ``Mild/Moderate CAV'' (red), and ``Severe CAV'' (blue);
    Kaplan-Meier estimator (dotted black).} 
  \vspace{-5mm}
\end{figure}

\section{Concluding remarks}\label{section:concluding remarks}

The preceding pages lay a theoretical and methodological foundation
for the development of exchangeable, Markov multi-state survival
processes. The approach put forward sits between the prior parametric
and non-parametric approaches for multi-state survival data.  
A representation theorem characterizes the entire space of
exchangeable, Markov multi-state processes via a measure $\Sigma$ and
constants ${\bf c}$.
Restrictions on choice of measure were driven by prior work on weak
continuity and the notion of composable systems developed in this
paper.  To fit the models to intermittently observed multi-state
survival data, extensions of existing MCMC samplers for Markov jump
processes were required; an approximate MCMC sampler version was
derived that achieved good empirical performance in simulations and
led to significant decreases in runtime. 
Application to cardiac allograft vasculopathy (CAV) data showed how to
interpret posterior parameter distributions.  Comparison to
Kaplan-Meier estimators showed how the model adjusts the
non-parametric survival function estimator to account for disease
progression.

\bibliography{statespace-refs}

\begin{thebibliography}{25}
\providecommand{\natexlab}[1]{#1}
\providecommand{\url}[1]{\texttt{#1}}
\expandafter\ifx\csname urlstyle\endcsname\relax
  \providecommand{\doi}[1]{doi: #1}\else
  \providecommand{\doi}{doi: \begingroup \urlstyle{rm}\Url}\fi

\bibitem[Aalen et~al.(1980)Aalen, Borgan, Gill, and Keiding]{Aalen1980}
O.~Aalen, O.~Borgan, R.D. Gill, and N.~Keiding.
\newblock Interaction between life history events: nonparametric analysi of
  prospective and retrospective data in the present of censoring.
\newblock \emph{Scand. J. Statist.}, 7:\penalty0 161--171, 1980.

\bibitem[Aalen et~al.(2008)Aalen, Borgan, and Gjessing]{Aalen2008}
O.~Aalen, O.~Borgan, and H.~Gjessing.
\newblock \emph{Survival and Event History Analysis: A Process Point of View}.
\newblock Springer-Verlag New York, New York, NY, 1 edition, 2008.

\bibitem[Aalen et~al.(2015)Aalen, Pedersen, Borgan, Hoff, Roysland, and
  S.]{Aalen2015}
O.~Aalen, T.~Pedersen, O.~Borgan, R.~Hoff, K.~Roysland, and S.~S.
\newblock Dynamic path analysis – a useful tool to investigate mediation
  processes in clinical survival trials.
\newblock \emph{Statistics in Medicine}, 34\penalty0 (29):\penalty0 3866--3887,
  2015.

\bibitem[Aldous(1996)]{Aldous1996}
David Aldous.
\newblock \emph{{Probability Distributions on Cladograms}}, pages 1--18.
\newblock Springer, 1996.

\bibitem[Breto et~al.(2009)Breto, , He, Ionides, and King]{Breto2009}
C.~Breto, , D.~He, E.~Ionides, and A.A. King.
\newblock Time series analysis via mechanistic models.
\newblock \emph{The Annals of Applied Statistics}, 3\penalty0 (1):\penalty0
  319--348, 2009.

\bibitem[Clayton(1991)]{Clayton1991}
D.G. Clayton.
\newblock A monte carlo method for bayesian inference in frailty models.
\newblock \emph{Biometrics}, 47:\penalty0 467--485, 1991.

\bibitem[De~Finetti(1972)]{Finetti1972}
B.~De~Finetti.
\newblock \emph{Probability, Induction and Statistics}.
\newblock John Wiley \& Sons Ltd, 1 edition, 1972.

\bibitem[Dempsey and McCullagh(2017)]{DempseyMSP}
W.~Dempsey and P.~McCullagh.
\newblock Weak continuity of predictive distributions for markov survival
  processes.
\newblock \emph{Electronic Journal of Statistics}, 11\penalty0 (2):\penalty0
  5406--5451, 2017.

\bibitem[Dempsey and McCullagh(2018)]{DempseyLDA}
W.~Dempsey and P.~McCullagh.
\newblock Survival models and health sequences.
\newblock \emph{Lifetime Data Analysis (with discussion)}, 24\penalty0
  (4):\penalty0 550--584, 2018.

\bibitem[Diggle et~al.(2008)Diggle, Sousa, and Chetwynd]{Diggle2008}
P.J. Diggle, I.~Sousa, and A.~Chetwynd.
\newblock Joint modeling of repeated measurements and time-to-event outcomes:
  The fourth armitage lecture.
\newblock \emph{Statistics in Medicine}, 27\penalty0 (16):\penalty0 2981--2998,
  2008.

\bibitem[Farewell and Henderson(2010)]{Farewell2010}
D.~Farewell and R.~Henderson.
\newblock Longitudinal perspectives on event history analysis.
\newblock \emph{Lifetime Data Analysis}, 16\penalty0 (1):\penalty0 102--117,
  2010.

\bibitem[Gelman et~al.(2008)Gelman, Jakulin, Pittau, and Su]{Gelman2008}
A.~Gelman, A.~Jakulin, M.G. Pittau, and Y-S Su.
\newblock A weakly informative default prior distribution for logistic and
  other regression models.
\newblock \emph{Ann. Appl. Stat.}, 2\penalty0 (4):\penalty0 1360--1383, 2008.

\bibitem[Hajiaghayi et~al.(2014)Hajiaghayi, Kirkpatrick, Wang, and
  Bouchard-C\^{o}t\'{e}]{Hajiaghayi2014}
M.~Hajiaghayi, B.~Kirkpatrick, L.~Wang, and A.~Bouchard-C\^{o}t\'{e}.
\newblock Efficient continuous-time markov chain estimation.
\newblock In \emph{International Conference on Machine Learning (ICML)}, 2014.

\bibitem[Hjort(1990)]{Hjort1990}
N.L. Hjort.
\newblock Nonparametric bayes estimators based on beta processes in models for
  life history data.
\newblock \emph{Annals of Statistics}, 18:\penalty0 1259--1294, 1990.

\bibitem[Hobolth and Stone(2009)]{HobolthStone2009}
A.~Hobolth and E.A. Stone.
\newblock Simulation from endpoint-conditioned, continuous-time markov chains
  on a finite state space, with applications to molecular evolution.
\newblock \emph{Ann. Appl. Stat.}, 3\penalty0 (3):\penalty0 1225--1310, 2009.

\bibitem[Jensen(1953)]{Jensen1953}
A.~Jensen.
\newblock Markoff chains as an aid in the study of markoff processes.
\newblock \emph{Skand. Aktuarietiedskr.}, 36:\penalty0 87--91, 1953.

\bibitem[Jepsen et~al.(2015)Jepsen, Vilstrup, and Andersen]{Jepsen2015}
P.~Jepsen, H.~Vilstrup, and P.K. Andersen.
\newblock The clinical course of cirrhosis: The importance of multistate models
  and competing risks analysis.
\newblock \emph{Hepatology}, 62\penalty0 (1):\penalty0 292--302, 2015.

\bibitem[Kalbfleisch(1978)]{Kalbfleisch1978}
J.D. Kalbfleisch.
\newblock Nonparametric bayesian analysis of survival time data.
\newblock \emph{J.~Roy.\ Statist.\ Soc.~B}, 40:\penalty0 214--221, 1978.

\bibitem[Kurland et~al.(2009)Kurland, Johnson, Egleston, and
  Diehr]{Kurland2009}
B.F. Kurland, L.L. Johnson, B.L. Egleston, and P.H. Diehr.
\newblock Longitudinal data with follow-up truncated by death: match the
  analysis method to the research aims.
\newblock \emph{Statistical Science}, 24\penalty0 (2):\penalty0 211--222, 2009.

\bibitem[Little et~al.(2010)Little, D'agostino, Dickersin, Emerson, Farrar,
  Frangakis, Hogan, Molenberghs, Murphy, Neaton, Scharfstein, Shih, Siegel,
  Stern, Cohen, and Gaskin]{PNASreport}
R.~Little, R.~D'agostino, K.~Dickersin, S.~Emerson, J.~Farrar, C.~Frangakis,
  J.~Hogan, G.~Molenberghs, S.~Murphy, J.~Neaton, D.~Scharfstein, W.~Shih,
  J.~Siegel, H.~Stern, M.~Cohen, and A.~Gaskin.
\newblock Handbook of missing data methodology.
\newblock Technical report, National Research Council of the National
  Academies, Washington, D.C., 2010.

\bibitem[Pitman(2003)]{Pitman03}
Jim Pitman.
\newblock \emph{Poisson-Kingman partitions}, volume Volume 40 of \emph{Lecture
  Notes--Monograph Series}, pages 1--34.
\newblock Institute of Mathematical Statistics, 2003.
\newblock \doi{10.1214/lnms/1215091133}.
\newblock URL \url{https://doi.org/10.1214/lnms/1215091133}.

\bibitem[Rao and Teh(2013)]{RaoTeh2013}
V.~Rao and Y.~W. Teh.
\newblock Mcmc for continuous-time discrete state systems.
\newblock In \emph{NIPS}, 2013.

\bibitem[Saeedi and Bouchard-C\^{o}t\'{e}(2011)]{Saeedi2011}
A.~Saeedi and A.~Bouchard-C\^{o}t\'{e}.
\newblock Priors over recurrent continuous time processes.
\newblock In \emph{NIPS}, volume~24, pages 2052--2060, 2011.

\bibitem[Schweder(2007)]{Schweder1970}
V.~Schweder.
\newblock Composable finite markov processes.
\newblock \emph{Scand. J. Statist.}, 34:\penalty0 169--185, 2007.

\bibitem[van~den Hout(2016)]{denHout2016}
A.~van~den Hout.
\newblock \emph{Multi-State Survival Models for Interval-Censored Data}.
\newblock Chapman \& Hall/CRC Monographs on Statistics \& Applied Probability,
  New York, NY, 1 edition, 2016.

\end{thebibliography}
\bibliographystyle{plainnat}

\appendix

\section{Proof of discrete-time characterization}
\label{section:discrete_proof}

We start with a discussion of an equivalent
matrix representation used in proofs of
both discrete and continuous-time characterizations.

\subsection{Matrix equivalent representation}

For every~$y \in [s]^{\mathbb{N}}$, there exists 
an equivalent representation as a matrix with
an infinite number of rows and $k$ columns~$M
\in [s]^{\mathbb{N} \otimes k}$ where the first
row~$M_{1,\cdot} = [y_1, \ldots, y_k ]$.
Let~$M_{\cdot,i} \in [s]^{\mathbb{N}}$ denote the $i$th column of 
$M$ (i.e., $M = [ M_{\cdot,1} \, | \, \ldots \, | \, M_{\cdot,k} ]$.
Exchangeability of~$\Y$ implies 
column exchangeability of~$\M$.  That is,
for a set of permutations~$(\sigma_1, \ldots, \sigma_k)$ 
such that~$\sigma_i: \mathbb{N} \to \mathbb{N}$ for $i = 1,\ldots,k$,
we have
\[
\M = [ \M_{\cdot, 1} \, | \, \ldots \, | \, \M_{\cdot, k} ] \overset{D}{=} 
[\M_{\cdot, 1}^{\sigma_1} \, | \, \ldots | \, \M^{\sigma_k}_{\cdot, k} ] = \M^{\sigma}
\]
where $\overset{D}{=}$ stands for equivalent in distribution.
We define this property as {\em column-wise exchangeable}.
Note exchangeability implies column-wise exchangeability but not
vice versa.
Restriction acts column-wise
\[
\M^{[n]} = [ \M^{[n]}_{\cdot, 1} \, | \, \ldots \, | \, \M_{\cdot, k}^{[n]} ]
\]
with such matrices in one-to-one correspondence with
elements in~$[s]^{[n \cdot k]}$.

We define an action of the matrix representation~$A$ on $y \in [s]^{\mathbb{N}}$
by $A(y) = (A_{1,y_1}, A_{2,y_2}, \ldots)$. In other words,
the $i$th row of $A$,~$A_{i,\cdot}$, acts on $y_i$ by sending it to $A_{i,y_i}$.
The identity map~$I$ is defined by each row~$I_{i, \cdot}$ being 
equal to $[1 2 \ldots k]$; then $I(y) = y$ for all $y \in [s]^{\mathbb{N}}$.
The equivalent vector representation of $I$ is defined as~$\text{id} \in [s]^{\mathbb{N}}$.

We express the {\em asymptotic frequency} of~$A$ by $k$-vector
$|A|_k = (|A_1|,\ldots, |A_k|)$ assuming $|A_i|$ exists.  

The proofs below are for the complete graph case. As~$G$
is simply a restriction of the measure to a particular
subset of transition matrices~$\P_G$, the proofs below
yield the desired results.

\begin{proof}[Proof of Theorem~\ref{thm:discrete_main}]
  By Kolmogorov consistency,~$\Y_{[n]}$ is a Markov chain
  governed by transition probability rules~$\pr ( \Y (t) = y^\prime 
  \given \Y (t-1) = y)$. 
  Restriction to~$[n]$ yields a transition rule for $y, y^\prime \in [s]^{[n]}$:
  \[
    \pr_{n} ( \Y_{[n]} (t) = y^\prime \given \Y_{[n]} (t-1) = y ) = 
    \pr ( \Y (t) = R_n^{-1} (y^\prime) \given \Y (t-1) = y^\star )
  \]
  where $R_n$ is the restriction operation so 
  $R_n^{-1} (y^\prime) = \{ y \in [s]^{\mathbb{N}} 
  \text{ s.t. } R_n (y) = y^\prime \}$
  and $y^\star \in R_n^{-1} (y)$.
  Without loss of generality, 
  we focus on time~$t=1$.

  We define a measure~$\eta$ by
  \[
    \eta (\cdot) := \pr ( \cdot \given \text{id} ) 
  \]
  Via the matrix representation, we can think of~$\eta$ as a 
  measure on matrices~$A \in [s]^{\mathbb{N} \otimes k}$.
  Restriction to~$[n]$ yields $A^{[n]} \sim \eta^{[n]} (\cdot) = 
  \pr_{nk} (\cdot \given \text{id}_{nk})$.
  The action of $A^{[n]}$ on $x \in [s]^{[n]}$ is then given by
  \[
    A^{[n]} (x) = \eta^{[n]} (x) = \pr_{n} ( \cdot \given I_n (x)) = \pr_{n} ( \cdot \given x )
  \]
  as we require.

  The above argument shows that there exists a measure~$\eta$ 
  such that $\Y^\star$ defined by
  \[
    \Y^\star (t) = ( A_t \circ A_{t-1} \circ \ldots A_{1} ) (Y_0^\star)
  \]
  is equivalent in distribution to~$\Y$.
  Here $A_t$ are independent, identical distributed draws from $\eta$
  for each time~$t \in \mathbb{N}$.

\end{proof}

\begin{proof}[Proof of Corollary~\ref{cor:discrete_recurrent}]
  Consider the recurrent event process~$\Y$ up
  time $\tau<\infty$. Then $\Y^\star = \tau \wedge \Y$
  is a version of $\Y$ on $t \in 0,1,\ldots,\tau$.
  Let~$\eta_{\tau}$ to denote the measure associated
  with~$\Y^{\star}$.  
  For $P \in \P_{\tau}$, let $R_{\tau^\prime, \tau} (A)$ be
  the restriction of this $\P_{\tau^\prime}$.
  Then for $\tau^\prime < \tau$
  \[
    \eta_{\tau} ( \{ P \in \P_{\tau} \given R_{\tau^\prime, \tau} (P) = P^{\star} \})
    = \eta_{\tau^\prime} ( \{ P^{\star} \} )
  \]
  So we have consistency across~$\tau > 0$.  
  We define the measure~$\eta$ on $\P_{\infty}$
  by
  \[
    \eta (\cdot) = \lim_{\tau \uparrow \infty} \eta_{\tau} (\cdot)
  \]
  is the unique measure such that
  $\Y^\star$ is a version of $\Y$.
\end{proof}

\section{Proof of continuous-time characterization}
\label{section:continuous_proof}

Again the proof below is for the complete graph case. As~$G$
is simply a restriction of the measure to a particular
subset of transition matrices~$\P_G$, the proof below
yields the desired result.

\begin{proof}[Proof of Theorem~\ref{thm:cts_main}]
  \label{proof:cts_main}
  Like in the discrete-case, we construct the measure~$\eta$
  from the transition rule which governs~$\Y$.
  This will connect~$\Y^{\star}$ to $\Y$ such that they are 
  equal in law.
  
  Since~$\Y_{[n]}$ is a Markov process on $[s]^{[n]}$, it is 
  governed by a transition rate function
  \[
    Q_n (y, y^\prime) = \lim_{t \downarrow 0} \frac{1}{t} \pr ( \Y_{[n]} (t) = y^\prime \given \Y_{[n]} (0) = y).
  \]
  We start by describing the key characteristics of the transition rate function
  \begin{enumerate}
  \item \label{p1} The transition rate function exhibits {\bf finite activity}:
    \[
      \sum_{y^\prime \neq y} Q_n (y, y^\prime) < \infty
    \]
  \item \label{p2} The transition rate function is {\em exchangeable}.  That is,
    for any~$\sigma : \mathbb{N} \to \mathbb{N}$ and $y \neq y^\prime$:
    \[
      Q_n (y, y^\prime) = Q_n (y^{\sigma}, (y^\prime)^{\sigma}).
    \]
  \item \label{p3} The transition rate functions are {\em consistent}. That is,
    for $y, y^\prime \in [s]^{[m]}$ and $m \leq n$, 
    \[
      Q_m (y, y^\prime) = Q_n ( y^\star, R_{m,n}^{-1} (y^\prime)) 
    \]
    where $R_{m,n}^{-1}$ is the inverse of the restriction operator from $[n]$ to $[m]$
    and $y^\star \in R_{m,n}^{-1} (y)$.
  \end{enumerate}

  We then define the measure for $A \in [s]^{[n] \times s} \backslash \{ \id_{k,n} \}$
  as
  \[
    \eta_{n} (A) = Q_n ( \id_{k,n}, A) 
  \]
  This measure is is column-wise exchangeable by exchangeability of~$Q$ and satistfies
  \begin{equation} \label{eq:}
    \eta_n (A) = \eta \left( \{ A^\star : [s]^{[n] \times s} \given (A^{\star})^{[n]} = A \} \right)
  \end{equation}
  for all $m \leq n$ and $A \in [s]^{\mathbb{N} \times s}$ by consistency of~$Q$.

  The measure~$\eta (A) = Q_n ( \id_{k}, A )$ is also column-wise exchangeable and 
  satisfies
  \begin{itemize}
  \item $\eta ( \{ \id_k \}) = 0$ (i.e., a transition must occur) and
  \item $\eta ( \{ A \given A^{[n]} \neq \id_{k,n} ) < \infty$ (i.e., finite, restricted activity)
  \end{itemize}
  Following~\cite{Pitman03}, we construct a process $\Y^\star = ( \Y^\star (t), t \geq 0)$
  via its finite restrictions $\Y_{[n]}^\star = ( \Y_{[n]}^\star (t), t \geq 0)$.
  Let~${\bf A} = \{ (t, A) \subset \mathbb{R}^+ \times [s]^{\mathbb{N} \otimes k} \}$ be a Poisson
  point process with intensity~$dt \otimes \eta$.  
  Given an inital state~$\Y^\star (0)$, for each $t > 0$ if $t$ is an atom of ${\bf A}$ then
  \begin{itemize}
  \item if $A_t^{[n]} \neq \id_{k,n}$, then set~$\Y_{[n]}^\star (t) = A_t^{[n]} (\Y_{[n]}^\star (t-)) $
  \item otherwise~$\Y_{[n]}^\star (t) = \Y_{[n]}^\star (t-)$
  \end{itemize}

  The difference between the continuous and discrete-time setting is the random
  time between jumps and that the jumps (1) occur for an infinite fraction 
  of the units as $n \to \infty$, or (2) occur for a single unit~$u \in \mathbb{N}$.
  By construction for $m \geq n$, the restriction of $Y_{[m]}^\star$ to $[n]$ is consistent with 
  $Y_{[n]}^\star$ so we have $\Y^\star$ is a unique $[s]^{\mathbb{N}}$-valued process.

  \begin{lemma}
    \label{lemma:ystar}
    The process~$\Y^\star$ is a Markov, exchangeable state-space process.
  \end{lemma}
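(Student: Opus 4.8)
The plan is to verify the three defining properties of a Markov, exchangeable multi-state survival process for $\Y^\star$ directly from its Poisson-point-process construction, exploiting the corresponding properties (\ref{p1})--(\ref{p3}) of the transition rate functions $Q_n$ that were used to build $\eta$. First I would establish the \emph{Markov property}: since $\mathbf{A}$ is a Poisson point process on $\mathbb{R}^+ \times [s]^{\mathbb{N}\otimes k}$ with intensity $dt\otimes\eta$, its restriction to any interval $(t,\infty)$ is independent of its restriction to $[0,t]$; combined with the fact that $\Y^\star(t')$ for $t'>t$ is a deterministic functional of $\Y^\star(t)$ and the atoms of $\mathbf{A}$ in $(t,t']$, the conditional law of $(\Y^\star(t'),t'>t)$ given $\mathcal{H}^\star(t)$ depends only on $\Y^\star(t)$. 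Time-homogeneity follows because the intensity $dt\otimes\eta$ is translation-invariant in $t$. One must check that the construction is well-defined, i.e. that on any finite restriction $[n]$ only finitely many atoms in a bounded time interval actually move $\Y^\star_{[n]}$; this is exactly the finite-restricted-activity bound $\eta(\{A : A^{[n]}\neq \id_{k,n}\})<\infty$ recorded just above, so $\Y_{[n]}^\star$ is a genuine (pure-jump, finite-activity) continuous-time Markov chain on $[s]^{[n]}$.

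Next I would check \emph{exchangeability}. Fix $n$ and a permutation $\sigma:[n]\to[n]$. Because $\eta$ was defined as a column-wise exchangeable measure (inherited from property~(\ref{p2}) of $Q_n$), the pushforward of $\eta^{[n]}$ under the coordinate permutation induced by $\sigma$ equals $\eta^{[n]}$; hence the Poisson point process $\mathbf{A}^{[n]}$ is invariant in law under applying $\sigma$ to the rows within each column. Since the update rule $\Y_{[n]}^\star(t)=A_t^{[n]}(\Y_{[n]}^\star(t-))$ is equivariant under such relabeling and the initial state $\Y^\star(0)$ is drawn from an exchangeable law, it follows that $(\Y^\star_{[n]})^\sigma \overset{D}{=} \Y^\star_{[n]}$, which is the definition of (partial) exchangeability.

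Then I would verify \emph{consistency under subsampling}. The key point is that the restriction map commutes with the dynamics: equation~\eqref{eq:} (the consistency identity $\eta_n(A)=\eta(\{A^\star : (A^\star)^{[n]}=A\})$, coming from property~(\ref{p3}) of $Q$) says precisely that the image of the Poisson point process $\mathbf{A}$ under $(t,A)\mapsto(t,A^{[n]})$ is the Poisson point process governing $\Y_{[n]}^\star$. Since the update rules on $[m]$ and $[n]$ agree on the first $n$ coordinates for $m\ge n$, and the initial laws are consistent by exchangeability of $\Y^\star(0)$, we get $R_{m,n}(\Y_{[m]}^\star)$ is a version of $\Y_{[n]}^\star$; this is consistency, and it was already noted in the construction that ``the restriction of $Y_{[m]}^\star$ to $[n]$ is consistent with $Y_{[n]}^\star$.'' Finally, since $\flat\in\S$ is absorbing for each row-action $A$ (because it is absorbing under each $Q_n$, by the structure of $\P_G$), flatlining is preserved, so $\Y^\star$ is indeed a multi-state survival process.

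The main obstacle I anticipate is not any single property in isolation but the careful bookkeeping needed to make the infinite-particle construction rigorous: one must show that the family $\{\Y_{[n]}^\star\}_{n\ge 1}$ is Kolmogorov-consistent and hence defines a single $[s]^{\mathbb{N}}$-valued process, and that ``jumps affecting an infinite fraction of units'' (case (1) in the construction) are handled correctly — in particular that, projected to any finite $[n]$, such a jump induces a legitimate i.i.d.-across-units transition according to some $P\in\P_G$, matching the two-type jump description $\Y_{\Sigma,c}^\star$ of Theorem~\ref{thm:cts_main}. This is where the decomposition of $\eta$ into a $\Sigma$-part (simultaneous i.i.d. transitions) and a $\mathbf{c}$-part (single-unit transitions), together with the integrability condition~\eqref{eq:cts_char}, must be invoked; I would lean on the construction of \cite{Pitman03} for the analogous point-process bookkeeping and only sketch the verification that the resulting finite-$n$ generator reproduces $Q_n$.
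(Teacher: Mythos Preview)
Your approach is correct and essentially the same as the paper's: the paper's proof is a three-line summary citing consistency (``by the above argument''), the Markov property (``by construction''), and exchangeability (from column-wise exchangeability of $\eta$), which is precisely what your first three paragraphs spell out in detail. Your final paragraph worries about the $\Sigma$/$\mathbf{c}$ decomposition and the integrability condition~\eqref{eq:cts_char}, but these are not needed for this lemma---they are established in separate subsequent lemmas within the proof of Theorem~\ref{thm:cts_main}, after $\Y^\star$ has already been shown to be a version of $\Y$; for Lemma~\ref{lemma:ystar} itself you only need the finite-restricted-activity bound $\eta(\{A:A^{[n]}\neq\id_{k,n}\})<\infty$, which you already invoked.
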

  
  \begin{proof}
    Consistency is given by the above argument; the process is Markovian
    by construction and the assumptions on~$\Y$.  Exchangeability is due
    to $\eta$ being a column-wise exchangeable measure since $\eta_{[n]}$ 
    are finite, column-wise exchangeable measures.
  \end{proof}

  The final concern before showing that 
  $\Y^\star$ is stochastically equivalent
  to $\Y$ is the uniqueness of the 
  measure~$\eta$ related to the restricted
  measures~$\eta_{n}$.  

  \begin{lemma}
    There exists unique measure~$\eta$
    on $[s]^{\mathbb{N} \otimes s}$ 
    which satisfies (1) $\eta (\{\id_k\}) = 0$, 
    (2) $\eta ( \{ A \in [s]^{\mathbb{N} \otimes s} 
    \given A^{[n]} \neq \id_{k,n} \}) < \infty$,
    and
    \begin{equation}
      \label{eq:lemma_condition}
      \eta ( \{ A^{\star} \in [s]^{\mathbb{N} \otimes s} 
      \given (A^{\star})^{[n]} =  A \} ) = \eta_n (A)
    \end{equation}
    for all $n > 0$ and $A \in [s]^{[n] \otimes s}$.
  \end{lemma}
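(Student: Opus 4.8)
The plan is to prove this as a Kolmogorov‑type extension result, the one wrinkle being that $\eta$ is not a probability measure but a $\sigma$‑finite, ``L\'evy‑type'' measure whose mass accumulates only at the single point $\id_k$, so it cannot be produced by a single application of the extension theorem and must instead be assembled from finite pieces. Write $\Omega := [s]^{\mathbb{N} \otimes s}$ with the product topology inherited from the finite discrete space $[s]^{s}$; then $\Omega$ is compact and metrizable, its Borel $\sigma$‑field is generated by the cylinders $C(A) := \{ A^\star \in \Omega \given (A^\star)^{[n]} = A\}$ (for $A \in [s]^{[n]\otimes s}$, $n\ge 1$), and these cylinders form a $\pi$‑system, since the intersection of two of them is again a cylinder or is empty. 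Put $E_n := \{ A \in \Omega \given A^{[n]} \neq \id_{k,n}\}$; each $E_n$ is a finite union of disjoint cylinders, hence clopen, the sequence $(E_n)$ is increasing, and $\bigcup_n E_n = \Omega \setminus \{\id_k\}$. Consistency of the rate functions (property~\ref{p3}) yields, for $n' > n$ and $A \neq \id_{k,n}$, the relation $\eta_n(A) = \sum \eta_{n'}(A^\star)$, the sum over $A^\star \in [s]^{[n']\otimes s}$ with $(A^\star)^{[n]} = A$; note no such $A^\star$ equals $\id_{k,n'}$, so $\eta_{n'}$ is evaluated only where it is defined.

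First I would build, for each fixed $n$, a finite Borel measure $\eta^{(n)}$ on $E_n$. For $n' \geq n$ let $\mu_{n,n'}$ denote $\eta_{n'}$ restricted to $\{ A^\star \in [s]^{[n']\otimes s} \given (A^\star)^{[n]} \neq \id_{k,n}\}$. Summing the displayed relation over $A \neq \id_{k,n}$ shows each $\mu_{n,n'}$ has the same finite total mass $c_n := \sum_{A \neq \id_{k,n}} \eta_n(A)$ -- finiteness being exactly the finite‑activity property~\ref{p1} -- and the family $(\mu_{n,n'})_{n' \geq n}$ is consistent under the coordinate restrictions $[n'] \to [n'']$ for $n \leq n'' \leq n'$, the displayed relation (with $(n,n')$ replaced by $(n'',n')$) being precisely the required compatibility of cylinder masses. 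After normalising by $c_n$ (the case $c_n = 0$ is trivial), the Kolmogorov extension theorem -- or, since $E_n$ is compact, a direct Carath\'eodory extension of the clopen‑cylinder premeasure -- produces a unique Borel probability measure on $E_n$, which rescaled by $c_n$ I call $\eta^{(n)}$; it is pinned down by $\eta^{(n)}(C(A^\star)) = \eta_{n'}(A^\star)$ for every $n' \geq n$ and every $A^\star$ with $(A^\star)^{[n]} \neq \id_{k,n}$. These identities are shared by $\eta^{(n)}$ and by the restriction of $\eta^{(n+1)}$ to $E_n$ and determine a measure on $E_n$ uniquely, so $\eta^{(n+1)}|_{E_n} = \eta^{(n)}$.

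Next I would glue: set $\eta(\{\id_k\}) := 0$ and $\eta := \eta^{(1)}|_{E_1} + \sum_{m \geq 1} \eta^{(m+1)}|_{E_{m+1}\setminus E_m}$. Since $E_1$ and the sets $E_{m+1}\setminus E_m$ are disjoint clopen pieces with union $\Omega\setminus\{\id_k\}$, and each summand is finite, $\eta$ is a well‑defined $\sigma$‑finite Borel measure on $\Omega$; telescoping the compatibilities above gives $\eta|_{E_n} = \eta^{(n)}$, hence $\eta(E_n) = c_n < \infty$ (condition~(2)), while $\eta(\{\id_k\}) = 0$ is condition~(1), and for $A \neq \id_{k,n}$ the inclusion $C(A) \subseteq E_n$ gives $\eta(C(A)) = \eta^{(n)}(C(A)) = \eta_n(A)$, which is \eqref{eq:lemma_condition}. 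For uniqueness, any $\eta'$ satisfying (1), (2) and \eqref{eq:lemma_condition} agrees with $\eta$ on all non‑identity cylinders; since $C(\id_{k,n}) \setminus \{\id_k\}$ is the countable union over $n' > n$ of the non‑identity level‑$n'$ cylinders restricting to $\id_{k,n}$, those non‑identity cylinders generate the Borel $\sigma$‑field of $\Omega \setminus \{\id_k\}$, they form a $\pi$‑system, and $E_n \uparrow \Omega\setminus\{\id_k\}$ with $\eta(E_n) = \eta'(E_n) = c_n < \infty$; Dynkin's $\pi$--$\lambda$ theorem then forces $\eta = \eta'$ on $\Omega\setminus\{\id_k\}$, and both assign $\{\id_k\}$ measure zero.

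The step I expect to be the main obstacle is coming to terms with the non‑finiteness of $\eta$. The ``lost mass'' under restriction -- transitions touching only coordinates in $\{n+1,\dots,n'\}$, which carry an $\id_{k,n'}$‑cylinder into the diagonal and hence contribute nothing to $\eta_n$ -- means $c_n = \eta(E_n)$ is only non‑decreasing in $n$, typically with $\sup_n c_n = \infty$ (e.g.\ whenever some single‑unit rate $c_{i,i'}$ in \eqref{eq:cts_char} is positive, so that $\eta(\Omega) = \infty$). This is exactly why the conditions in the statement are imposed on the $E_n$ rather than on $\Omega$, why the extension theorem can only be invoked one level at a time, and why the resulting $\eta$ is merely $\sigma$‑finite; correctly identifying $E_n$ as the localising sequence and checking that the level‑$n$ total masses are genuinely constant in $n'$ are the points needing care, the rest being bookkeeping.
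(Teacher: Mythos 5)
Your proof is correct and follows essentially the same route as the paper's: extend the consistent cylinder masses $\eta_n$ to a measure on the $\pi$-system of cylinders via a Carath\'eodory/Kolmogorov-type extension, and get uniqueness from the $\pi$--$\lambda$ theorem together with $\sigma$-finiteness along the localizing sets $E_n$. Your write-up is considerably more careful than the paper's two-line argument (which disposes of existence with ``the measure is additive''), particularly in identifying $E_n = \{A \given A^{[n]} \neq \id_{k,n}\}$ as the sets on which the extension is finite and in gluing the level-$n$ extensions, but this is added rigor rather than a different approach.
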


  \begin{proof}
    The sets
    \[
      \{ A^{\star} \in [s]^{\mathbb{N} \otimes s} 
      \given (A^{\star})^{[n]} =  A \}
    \]
    are a $\pi$-system generating the
    $\sigma$-field on $[s]^{\mathbb{N} \otimes s}$.
    The above discussion proves equation
    \ref{eq:lemma_condition}; and the 
    measure is additive.  Therefore,
    uniqueness is a consequence of any measure extended to a
    $\sigma$-algebra being unique if the measure is $\sigma$-finite.
  \end{proof}

  \begin{lemma} 
    $\Y^\star$ is a version of $\Y$. 
  \end{lemma}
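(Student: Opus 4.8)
The plan is to show that $\Y$ and $\Y^\star$ are time-homogeneous Markov processes with the same initial law and the same transition-rate functions on every finite restriction, and then to invoke the uniqueness of the law of a Markov jump process given its initial distribution and its rates. By Kolmogorov consistency of both families it suffices to prove $\Y^\star_{[n]} \overset{D}{=} \Y_{[n]}$ for each fixed $n$. On the finite restriction, $\Y_{[n]}$ is by hypothesis the time-homogeneous Markov chain with rate function $Q_n$, while $\Y^\star_{[n]}$ is, by Lemma~\ref{lemma:ystar} and the Poisson construction, a time-homogeneous Markov chain whose jump mechanism has finite total rate because $\eta(\{A : A^{[n]} \neq \id_{k,n}\}) < \infty$ (this also precludes explosion, so both are genuine $[s]^{[n]}$-valued jump processes). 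Their initial distributions agree by construction of $\Y^\star(0)$. Hence everything reduces to identifying the rate function of $\Y^\star_{[n]}$ with $Q_n$.

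The rate function of $\Y^\star_{[n]}$ can be read directly off the construction. If $\Y^\star_{[n]}(t-) = y$ and $y' \neq y$, then $\Y^\star_{[n]}$ jumps to $y'$ exactly at the atoms $(t, A_t)$ of the Poisson point process with $A_t^{[n]}(y) = y'$; since the intensity is $dt \otimes \eta$, the jump occurs at rate $\eta(\{A : A^{[n]}(y) = y'\})$. Since $A^{[n]}(y) = y' \neq y$ forces $A^{[n]} \neq \id_{k,n}$, this set lies in the region of finite $\eta$-mass; decomposing over the restricted matrix $A_0 = A^{[n]}$ and using \eqref{eq:lemma_condition}, this rate equals $\sum_{A_0 : A_0(y) = y'} \eta_n(A_0) = \sum_{A_0 : A_0(y) = y'} Q_n(\id_{k,n}, A_0)$.

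It then remains to establish
\[
\sum_{A_0\,:\,A_0(y)=y'} Q_n(\id_{k,n}, A_0) \;=\; Q_n(y, y'), \qquad y' \neq y,
\]
and this is the step I expect to be the main obstacle. The point is that a single draw of a column-wise exchangeable matrix $A$ records, consistently, the destination of every coordinate from every possible current state, and the reference state $\id_{k,m}$ exhibits each of the $k$ values, so for large $m$ any configuration $y$ is the image of a restriction of $\id_{k,m}$ under an injective relabeling of coordinates; applying consistency (property~\ref{p3}) to pass between $Q_n$ and $Q_m$ and exchangeability (property~\ref{p2}) to transport the rate from $\id_{k,m}$ to $y$ yields the displayed identity. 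Equivalently, this is the assertion---already used implicitly in building $\eta$---that the action $A \mapsto A(\cdot)$ intertwines the column-wise exchangeable, consistent family $\{\eta_n\}$ with the exchangeable, consistent rate family $\{Q_n\}$. Granting it, $\Y^\star_{[n]}$ and $\Y_{[n]}$ are Markov chains with identical rate functions and initial laws, hence equal in distribution; letting $n$ range over $\Nat$ and using consistency of both families gives $\Y^\star \overset{D}{=} \Y$. A final check that $\eta$ assigns no mass to matrices moving a coordinate out of an absorbing state---immediate since each $Q_n$ has this property and $\eta$ is built from the $Q_n$---confirms that $\Y^\star$ flatlines correctly, and is therefore a bona fide version of $\Y$.
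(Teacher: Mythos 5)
Your proposal is correct and follows essentially the same route as the paper: identify the jump rate of $\Y^\star_{[n]}$ from $y$ to $y'$ as $\eta_n(\{A : A(y)=y'\})$, equate it with $Q_n(y,y')$ via the defining relation $\eta_n(A) = Q_n(\id, A)$ together with exchangeability and consistency, and conclude by uniqueness of the Markov chain law and Kolmogorov extension. In fact, the step you single out as the main obstacle --- transporting the rate from the reference configuration $\id$ to an arbitrary $y$ --- is precisely the chain of equalities the paper writes down without elaboration, so your argument supplies more justification there than the paper itself does.
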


  \begin{proof}
    First, the finite restrictions~$\eta_n$
    satisfies 
    \begin{align*}
      \eta_{n} ( \{ A \in [s]^{[n] \otimes s}  
      \given A(y) = y^\prime \} ) 
      &= \sum_{A : A(y) = y^\prime} Q_{nk}( I_{k,n}, A) \\
      &= Q_n ( I_{k,n} (y), A(y)) = Q_n ( y, y^\prime).
    \end{align*}
    And has finite activity:
    \[
      \eta_{n} ( \{ A \in [s]^{[n] \otimes s}  
      \given A(y) = y \} ) = 
      \sum_{y^\prime \neq y} Q_n (y, y^\prime) < \infty
    \]
    Therefore~$(\Y^{\star})^{[n]}$ is an 
    Markov, exchangeable process with
    jump rates~$Q_n (\cdot, \cdot)$.  
    By Kolmogorov's extension theorem,
    the unique process~$\Y^\star$ is a 
    version $\Y$.
  \end{proof}

  We still need to show~$\eta$ can 
  be decomposed into the respective 
  components:
  \begin{itemize}
    \item {\bf Dislocation measure}:
      measure on~$s \times s$ 
      transition matrices$\sim \Sigma$
      which satisfies
      \begin{align*}
        \Sigma ( \{ I_k \} ) &= 0 \\
        \int_{\P_k} (1-P_{\min}) \Sigma (dP) &< \infty
      \end{align*}
      where $P_{\min} = \min_i P_{i,i}$.
    \item {\bf Erosion measures}:
      Let~$A \in [s]^{\mathbb{N} \times s}$ 
      then we call this~$A = \id$ and
      we flip a single unite~$u \in \mathbb{N}$
      (i.e., $A_{u,i} = i^\prime$).
      Let~$\mu^u_{i,i^\prime}$ be this 
      point mass measure. Define
      \[
        \mu_{i,i^\prime} = \sum_{u \in \mathbb{N}} 
        \mu^u_{i,i^\prime}
      \]
      \item The combined measure is given by:
        \[
          \eta_{\Sigma, c} (\cdot)  = \mu_{\Sigma} (\cdot) + \sum_{i
            \neq i^\prime \in [s]} c_{i i^\prime} \mu_{i,i^\prime} (\cdot)
        \]
  \end{itemize}

  \begin{lemma}
    The measure~$\eta_{\Sigma, c}$ is a column-wise exchangeable measure 
    satisfying the necessary constraints.  
  \end{lemma}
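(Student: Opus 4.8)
The plan is to check, one property at a time, that $\eta_{\Sigma,c}$ meets every requirement imposed on a transition measure in the preceding lemmas: no mass at the identity, $\eta_{\Sigma,c}(\{A : A^{[n]}\neq\id_{k,n}\})<\infty$ for each $n$ (which forces $\sigma$-finiteness, since $\{A\neq\id\}=\bigcup_n\{A^{[n]}\neq\id_{k,n}\}$), and column-wise exchangeability. Each of these is preserved under finite nonnegative combinations of measures, so it suffices to verify them for the dislocation part $\mu_\Sigma$ and for a single erosion part $\mu_{i,i'}$, and then combine.

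First I would make $\mu_\Sigma$ precise as a mixture of product kernels: for $P\in\P_G$ let $\Pi_P$ be the law on $[s]^{\mathbb{N}\otimes s}$ of the random matrix whose entries $\{A_{u,i}\}$ are mutually independent with $A_{u,i}\sim P_{i,\cdot}$, and set $\mu_\Sigma(\cdot)=\int_{\P_G}\Pi_P(\cdot)\,\Sigma(dP)$; measurability of $P\mapsto\Pi_P(\cdot)$ on the generating $\pi$-system of cylinder events is routine, so this is a bona fide measure. Because under $\Pi_P$ the entries within each column are i.i.d.\ and distinct columns are independent, $\Pi_P$ is invariant under permuting the rows of each column separately, hence so is $\mu_\Sigma$; this is column-wise exchangeability. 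For the size conditions I would compute $\Pi_P(\{A^{[n]}=\id_n\})=\bigl(\prod_i P_{i,i}\bigr)^n$, whence
\[
\Pi_P(\{A^{[n]}\neq\id_n\})=1-\Bigl(\prod_i P_{i,i}\Bigr)^n\le n\Bigl(1-\prod_i P_{i,i}\Bigr)\le n\sum_i(1-P_{i,i})\le ns(1-P_{\min}),
\]
and integrating against $\Sigma$ with the integrability half of~\eqref{eq:cts_char} gives $\mu_\Sigma(\{A^{[n]}\neq\id_n\})\le ns\int_{\P_G}(1-P_{\min})\,\Sigma(dP)<\infty$. Finally $\Pi_P(\{A=\id\})=\lim_n\bigl(\prod_i P_{i,i}\bigr)^n$ vanishes unless $\prod_i P_{i,i}=1$, i.e.\ $P=I_s$, so the $\Sigma(\{I_s\})=0$ half of~\eqref{eq:cts_char} yields $\mu_\Sigma(\{\id\})=0$.

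Next I would handle the erosion measures. With $A^{(u;i,i')}$ denoting the matrix equal to $\id$ except that entry $(u,i)$ takes value $i'$, the measure $\mu^u_{i,i'}$ is the unit point mass at $A^{(u;i,i')}$ and $\mu_{i,i'}=\sum_{u\in\mathbb{N}}\mu^u_{i,i'}$ is a countable sum of point masses, hence a measure; it has no mass at $\id$ (each $A^{(u;i,i')}\neq\id$) and $\mu_{i,i'}(\{A^{[n]}\neq\id_n\})=\#\{u\le n\}=n<\infty$. It is column-wise exchangeable because column permutations $\sigma_1,\ldots,\sigma_k$ send $A^{(u;i,i')}$ to $A^{(\sigma_i^{-1}(u);i,i')}$ --- the single non-identity entry sits in column $i$ and is relocated by $\sigma_i$, while every other column is a copy of the identity column and is untouched by a row permutation --- so the sum over $u$ is invariant. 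Combining, $\eta_{\Sigma,c}=\mu_\Sigma+\sum_{i\neq i'}c_{i,i'}\mu_{i,i'}$ is column-wise exchangeable, puts no mass on $\{\id\}$, and satisfies $\eta_{\Sigma,c}(\{A^{[n]}\neq\id_n\})\le ns\int_{\P_G}(1-P_{\min})\,\Sigma(dP)+n\sum_{i\neq i'}c_{i,i'}<\infty$, so it is $\sigma$-finite with finite restricted activity, as required.

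I expect the only real subtlety to be the step linking~\eqref{eq:cts_char} to finite restricted activity of $\mu_\Sigma$: one must be careful that $\mu_\Sigma$ is genuinely well defined --- and $\sigma$-finite rather than merely finite --- when $\Sigma$ is an infinite measure concentrating near $I_s$, and it is precisely the elementary bound $1-(\prod_i P_{i,i})^n\le ns(1-P_{\min})$ that converts the hypothesis $\int(1-P_{\min})\,\Sigma(dP)<\infty$ into the needed finiteness. Everything else is routine bookkeeping with point masses and product kernels.
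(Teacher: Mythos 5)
Your proof is correct and follows essentially the same route as the paper: decompose $\eta_{\Sigma,c}$ into the dislocation part and the erosion parts, bound the restricted activity of $\mu_\Sigma$ by combining a union bound over columns with the inequality $1-p^n\le n(1-p)$ so that condition~\eqref{eq:cts_char} gives finiteness, and count point masses for the erosion part. You are somewhat more explicit than the paper about why column-wise exchangeability and $\mu_\Sigma(\{\id\})=0$ hold (via the product-kernel construction and $\Sigma(\{I_s\})=0$), which the paper dispatches as ``by construction,'' but the substance is the same.
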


  \begin{proof}
    We prove this for each component of~$\eta_{\Sigma,c}$.
    First, $\mu_{\Sigma} (\{ \id_k \}) = 0$ by construction.  Moreover,
    for $P \in \P_k$
    \begin{align*}
      \mu_P ( \{ A \given A^{[n]} \neq \id_{s,n} \} ) &\leq \mu_P ( \{ A \given A^{[n]} \neq \id_{s,n}) \\
                                                      &\leq \sum_{j=1}^{s} \mu_P ( \{ A \given A^{[n]}_{u, j} \neq j \text{for all } u \in [n] \} ) \\
                                                      &\leq k ( 1-p_{\min}^n) \leq n \cdot k ( 1-p_{\min}^n)
    \end{align*}
    which implies
    \[
      \mu_{\Sigma} (\{ A \given A^{[n]} \neq \id_{s,n} \} ) \leq n \cdot k \int_{\P_k} (1-p_{\min}) \Sigma (dP) < \infty
    \]
    by the above assumptions.
    
    Second, $\mu_{i,i^\prime} (\{ \id_k \}) = 0$ by construction. Moreover,
    \begin{align*}
      \sum_{i \neq i^\prime \in [s]} c_{i i^\prime} \mu_{i, i^\prime} ( \{ A \given A^{[n]} \neq \id_{s,n} \})
      &\leq c_{\max} \sum_{i \neq i^\prime \in [s]} \sum_{u \in [n]} \mu^{u}_{i, i^\prime} ( \{ A \given A^{[n]} \neq \id_{s,n} \}) \\
      &\leq c_{\max} {s \choose 2} n < \infty. 
    \end{align*}
  \end{proof}

  So we have that $\Y^\star$ is a version of $\Y$ and for any $\Sigma$ and $c$ the measure is $\mu_{\Sigma, c}$ 
  is column-wise exchangeable satisfying necessary constraints.  It rests to connect show that the measure
  $\eta$ can be decomposed such that there exists $\Sigma$ and $c$ such that $\eta = \mu_{\Sigma,c}$.

  \begin{lemma}
    For $\eta$ constructed from $Q$, $\eta$-almost every 
    $A \in [s]^{\mathbb{N} \otimes s}$ possesses asymptotic frequencies
    $|A|_s \in \P_s$.
  \end{lemma}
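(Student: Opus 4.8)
The plan is to prove the existence of $|A|_s$ one column at a time, and to get around the fact that $\eta$ is only $\sigma$-finite by exhausting its support with the finite-mass ``activity'' sets $E_{n_0} := \{A : A^{[n_0]} \neq \id_{s,n_0}\}$. Fix a pair $(i,i') \in [s]^2$ and, for $A \in [s]^{\mathbb{N}\otimes s}$, write $f_n^{(i,i')}(A) := n^{-1}\,\#\{u \in [n] : A_{u,i} = i'\}$ for the frequency of state $i'$ among the first $n$ rows of column $i$. It suffices to show that for each such pair $f_n^{(i,i')}(A)$ converges as $n\to\infty$ for $\eta$-almost every $A$. Indeed there are only $s^2$ pairs, so the finite intersection of the corresponding full-measure events is again of full measure; on that event the matrix $|A|_s$, with $(i,i')$ entry $\lim_n f_n^{(i,i')}(A)$, is well defined, and since the entries are nonnegative with $\sum_{i'} f_n^{(i,i')}(A) \equiv 1$, its row sums are $1$, i.e. $|A|_s \in \P_s$.

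First I would fix $n_0 \geq 1$ and restrict $\eta$ to $E_{n_0}$. By the finite restricted activity established above, $\eta(E_{n_0}) < \infty$, so (discarding the trivial case $\eta(E_{n_0})=0$) we may form the probability measure $\mathbb{Q}_{n_0} := \eta(\,\cdot\, \cap E_{n_0})/\eta(E_{n_0})$. The key observation is that membership in $E_{n_0}$ depends only on the first $n_0$ rows of $A$; hence $E_{n_0}$ is invariant under any column-wise permutation fixing $\{1,\dots,n_0\}$, and column-wise exchangeability of $\eta$ (apply it with $\sigma_i$ a transposition of two indices $>n_0$ and $\sigma_j=\id$ for $j\neq i$) then forces the sequence $(A_{u,i})_{u > n_0}$ to be exchangeable under $\mathbb{Q}_{n_0}$. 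By de Finetti's theorem — equivalently, by Doob's reverse martingale convergence theorem applied to the bounded reverse martingale $N^{-1}\#\{n_0 < u \le n_0 + N : A_{u,i} = i'\}$ relative to the decreasing $\sigma$-fields of events invariant under finite permutations of $\{n_0+1,n_0+2,\dots\}$ — this quantity converges $\mathbb{Q}_{n_0}$-almost surely as $N\to\infty$. The first $n_0$ rows contribute only an $O(n_0/n)$ term to $f_n^{(i,i')}$, so $f_n^{(i,i')}$ converges to the same limit, $\mathbb{Q}_{n_0}$-almost surely, i.e. $\eta|_{E_{n_0}}$-almost everywhere.

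Finally I would let $n_0 \to \infty$. Since $E_{n_0} \uparrow \{A \neq \id\}$ and $\eta(\{\id\}) = 0$, the set of $A$ at which $f_n^{(i,i')}$ fails to converge is contained in the union of $\{\id\}$ with the countably many $\eta$-null sets produced in the previous step, one for each $n_0 \in \mathbb{N}$; hence it is $\eta$-null. This establishes the claim for the fixed pair $(i,i')$, and intersecting over the finitely many pairs in $[s]^2$ completes the argument, giving $|A|_s \in \P_s$ for $\eta$-almost every $A$.

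The main obstacle is exactly the $\sigma$-finiteness of $\eta$: de Finetti's theorem and the reverse martingale theorem are statements about probability measures and cannot be applied to $\eta$ directly. The resolution is the remark that the finite-mass activity sets $E_{n_0}$ constrain only finitely many rows and are therefore invariant under arbitrary permutations of all the remaining rows; this is what lets column-wise exchangeability survive the conditioning $\eta \mapsto \eta|_{E_{n_0}}$, and what makes the exhaustion $E_{n_0}\uparrow\{A\neq\id\}$ usable. Everything else — reducing the matrix statement to $s^2$ scalar frequency statements, and checking that the limiting row sums equal $1$ — is routine.
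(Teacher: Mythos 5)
Your proposal is correct and follows essentially the same route as the paper: restrict $\eta$ to the finite-mass activity sets $\{A : A^{[n]}\neq \id_{s,n}\}$, observe that column-wise exchangeability survives for permutations fixing the first $n$ rows (the paper's ``$n$-shift''), deduce existence of asymptotic frequencies from exchangeability of the tail, note the first $n$ rows do not affect the limit, and exhaust by letting $n\to\infty$. Your version is somewhat more explicit about the de Finetti/reverse-martingale step and the per-entry bookkeeping, but there is no substantive difference.
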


  \begin{proof}
    The $\eta$ by construction satisfies the necessary conditions. We set~$\eta_n = \eta$
    on $\{ A \given A^{[n]} \neq \id_{s,n} \}$.  Then $\eta_n$ is column-wise exchangeable
    for $(\sigma_1, \ldots, \sigma_s)$ such that $\sigma_i : \mathbb{N} \to \mathbb{N}$
    fixes $[n]$. 
 
    We can find a column-wise exchangeable measure by simply
    considering ignoring the first~$n$ rows. Let~$\eta_{n}^\prime$ be measure
    obtained from $\eta$ after applying the {\em $n$-shift function} $\phi_n: A \to A^\prime$.
    Then $\eta_n^\phi$ is column-wise exchangeable and therefore has asymptotic frequencies.
    But asymptotic frequencies only depend on such an $n$-shift for every fixed~$n>0$
    (i.e., $|A|_s = |\phi_n (A)|_s$); therefore, $\eta_n$-almost every $A$ has 
    asymptotic frequencies.  

    To prove~$\eta$-almost every $A$ has asymptotic frequencies we simply note that
    $\eta_n \uparrow \eta$ and therefore the monotone convergence theorem
    completes proof.
  \end{proof}

  \begin{lemma}
    There exists a measure~$\Sigma$ such that 
    the restriction of~$\eta$ to $\{ |A|_s \neq I_s \}$ is equivalent to~$\mu_{\Sigma}$.
  \end{lemma}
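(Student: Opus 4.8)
The plan is to disintegrate $\eta$ along the asymptotic‑frequency map and to recognise its fibres as mixtures of i.i.d.\ arrays, thereby reading off $\Sigma$ as a pushforward. Write $F\colon A \mapsto |A|_s \in \P_s$ for the asymptotic‑frequency map, which by the previous lemma is defined for $\eta$‑almost every $A$, and for $P \in \P_s$ let $\mu_P$ be the law of the random matrix in $[s]^{\Nat \otimes s}$ whose rows are i.i.d.\ with the $u$‑th row distributed as $\bigotimes_{i=1}^{s} P_{i,\cdot}$; equivalently $\mu_P$ is the unique column‑wise exchangeable probability measure under which the $s$ columns are independent and column $i$ is i.i.d.\ from $P_{i,\cdot}$. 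By the strong law of large numbers $\mu_P$ is carried by $\{A : |A|_s = P\}$, so $\mu_{I_s} = \delta_{\id}$ while $\mu_P$ is carried by $\{A : |A|_s \neq I_s\}$ for every $P \neq I_s$ (note that single‑unit ``erosion'' configurations all have $|A|_s = I_s$ and so are excluded from the restricted measure — which is exactly why the restriction to $\{|A|_s\neq I_s\}$ is the right object). I would then set $\Sigma := F_*\bigl(\eta|_{\{|A|_s \neq I_s\}}\bigr)$; being supported off $\{I_s\}$ this gives $\Sigma(\{I_s\}) = 0$ for free, so the work reduces to proving $\eta|_{\{|A|_s \neq I_s\}} = \mu_\Sigma$ and then checking the integrability in \eqref{eq:cts_char}.

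Next I would pin down the conditional law of $\eta$ along the fibres $\{F = P\}$, which is the heart of the argument. The set $\{|A|_s \neq I_s\}$ is stable under (independent) column permutations and carries a $\sigma$‑finite restriction of $\eta$: it is exhausted by permutation‑stable sets of finite $\eta$‑mass, using $\eta(\{A^{[n]} \neq \id_{s,n}\}) < \infty$, so one may normalise on a finite‑mass piece and apply de Finetti. Column‑wise exchangeability makes each column an exchangeable $[s]$‑valued sequence, hence conditionally i.i.d.\ given its empirical distribution, which is precisely the row $P_{i,\cdot}$ of $P = |A|_s$. The key upgrade is that the columns are moreover \emph{conditionally independent} given $P$: for $i \neq i'$, de Finetti applied to the $[s]\times[s]$‑valued sequence $(A_{u,i}, A_{u,i'})_{u \in \Nat}$ produces a random directing measure $\nu$ on $[s]\times[s]$ with marginals $P_{i,\cdot}$ and $P_{i',\cdot}$; applying a uniform random permutation to the first $N$ entries of column $i'$ alone makes the joint empirical distribution of the pairs converge to $P_{i,\cdot}\otimes P_{i',\cdot}$, and since this permutation leaves the law of $\eta$ unchanged, $\nu = P_{i,\cdot}\otimes P_{i',\cdot}$ almost surely conditionally on $P$. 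Iterating over all $s$ columns shows that, conditionally on $F = P$, the rows of $A$ are i.i.d.\ with law $\bigotimes_i P_{i,\cdot}$, i.e.\ the conditional law is exactly $\mu_P$; integrating over the fibres yields $\eta|_{\{|A|_s \neq I_s\}} = \int_{\P_s} \mu_P\, \Sigma(dP) = \mu_\Sigma$.

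Finally I would deduce the integrability condition. Restricting the identity $\eta|_{\{|A|_s \neq I_s\}} = \mu_\Sigma$ to the event $\{A^{[1]} \neq \id_{s,1}\}$, the left‑hand side has mass at most $\eta(\{A^{[1]} \neq \id_{s,1}\}) < \infty$, while under $\mu_P$ the event $\{A^{[1]} \neq \id_{s,1}\}$ says that the first row is not $(1,\dots,s)$, which has probability $1 - \prod_i P_{i,i}$. Hence $\int_{\P_s}\bigl(1 - \prod_i P_{i,i}\bigr)\Sigma(dP) < \infty$, and since $1 - P_{\min} \leq 1 - \prod_i P_{i,i} \leq s\,(1 - P_{\min})$ this is the same as $\int_{\P_s}(1 - P_{\min})\,\Sigma(dP) < \infty$, so $\Sigma$ meets both requirements of \eqref{eq:cts_char}.

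The main obstacle I anticipate is the $\sigma$‑finite bookkeeping: de Finetti is a statement about probability measures, so the exhaustion of $\{|A|_s \neq I_s\}$ by permutation‑stable finite‑mass sets must be set up so that the conditional laws obtained on the pieces are mutually consistent and glue into a single kernel $P \mapsto \mu_P$, in the paintbox style of \citet{Pitman03}. The conceptual point, rather than a technical one, is the decorrelation step that upgrades ``conditionally i.i.d.\ columns'' to ``conditionally independent i.i.d.\ columns'' — this is the one place where the independence of the column permutations in the definition of column‑wise exchangeability (as opposed to mere joint exchangeability) is essential, and it is what forces the mixing to be over transition matrices $P \in \P_s$ rather than over arbitrary laws on $[s]^s$.
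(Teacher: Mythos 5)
Your proposal is correct and follows essentially the same route as the paper: both define $\Sigma$ as the pushforward of the restricted measure under the asymptotic-frequency map, identify the conditional law on each fibre $\{|A|_s = P\}$ as the paintbox $\mu_P$ via a de Finetti argument (the paper handles the $\sigma$-finiteness by applying the $n$-shift $\phi_n$ to the finite-mass pieces $\eta_n$, which is your ``exhaustion by permutation-stable finite-mass sets'' in a specific guise), and then read off the integrability in \eqref{eq:cts_char} from the finite activity of $\eta$ on finite restrictions. Your explicit decorrelation step upgrading conditionally i.i.d.\ columns to conditionally \emph{independent} columns is a detail the paper's proof leaves implicit in the identity $\eta_n^\phi(\cdot \mid |A|_s = P) = \mu_P(\cdot)$, but it is the same mechanism, not a different argument.
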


  \begin{proof}
    Let~$\phi_n(A)^{[m]}$ denote the restriction of $\phi_n (A)$ to $[m]$.
    Then
    \begin{align*}
      \eta_n ( \{ \phi_n(A)^{[2]} \neq \id_{s,2} \} \given |A|_s = P )
      &= \eta^\phi_n ( \{ \phi_n(A)^{[2]} \neq 
        \id_{s,2} \} \given |A|_s = P ) \\
      &= \eta^\phi_n ( \{ A^{[2]}  \neq \id_{s,2} \} \given |A|_s = P ) \\
      &= \mu_P ( \{ A^{[2]}  \neq \id_{s,2} \} ) \\
      &\geq 1 - p_{\min}^2 \geq 1 - p \\
      \Rightarrow \eta_n ( \{ A^{[2]}  \neq \id_{s,2} \} ) 
      &\geq \int_{\P_k} ( 1 - p_{\min} ) \Sigma_n(dP)
    \end{align*}
    with $\Sigma_n = \eta_n {\bf 1}[ |A|_k \neq I_k]$.
    As $n \to \infty$ this yields,
    \[
      \infty > \eta ( \{ \phi (A)^{[2]} \neq \id_{s,2} \} )
      \geq \int_{\P_s} (1-p_{\min}) \Sigma (dP)
    \]
    and $\Sigma ( \{ \id_s \}) = 0$ by construction.
    
    It rests to show that
    $\mu_{\Sigma} = {\bf 1} [ |A|_s \neq \id_s ] \eta$.
    We have 
    \[
      \eta ( \{ A^{[n]} = A^\star, |A|_s \neq \id_{s} \} ) =
      \lim_{m \uparrow \infty}  \eta_m ( \{ A^{[n]} = A^\star,
      A^{[m]} \neq \id_{s,m}, |A|_s \neq \id_{s} \} )
    \]
    The right hand side is equivalent to
    \begin{align*}
      \eta^\phi_m ( \{ A^{[n]} = A^\star,
      |A|_s \neq \id_{s} \} )
      &= \int_{\P_s} \mu_P ( \{ A^{[n]} = A^\star \} ) 
      |\eta_m^\phi| ( |A|_s \in dP ) \\
      &= \int_{\P_k} \mu_P ( \{ A^{[n]} = A^\star \} ) \Sigma (dP) \\
      &= \mu_{\Sigma} ( \{ A^{[n]} = A^\star \} )
    \end{align*}
  \end{proof}

  \begin{lemma}
    There exists a set of constants~$\{ c_{i,i^\prime} \}_{i \neq i^\prime \in [s]}$ 
    such that the restriction of~$\eta$ to $\{ |A|_s = I_s \}$ is equivalent
    to~$\mu_{c}(\cdot) = \sum_{i\neq i^\prime \in [s]} 
    c_{i i^\prime} \mu_{i,i^\prime} (\cdot)$.
  \end{lemma}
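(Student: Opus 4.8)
The plan is to mirror the proof of the preceding lemma --- which treated the part of $\eta$ supported on $\{|A|_s\neq I_s\}$ --- and to show that the complementary piece
\[
\eta_0 \;:=\; \mathbf{1}\bigl[\,|A|_s = I_s\,\bigr]\,\eta
\]
is forced to be a superposition of single-unit transitions. First I would record that $\eta_0$ inherits the structure of $\eta$: the map $A\mapsto |A|_s$ is unchanged under column-wise relabelling of rows, so $\{|A|_s = I_s\}$ is invariant under column-wise permutations and $\eta_0$ is column-wise exchangeable; $\eta_0\le\eta$ gives finite restricted activity, $\eta_0(\{A : A^{[n]}\neq \id_{s,n}\})<\infty$ for every $n$; and $\eta_0(\{\id\}) = 0$. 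Writing $D(A):=\{u\in\Nat : A_{u,\cdot}\neq[1,\ldots,s]\}$ for the set of \emph{displaced rows}, the case $n=1$ gives $\eta_0(\{1\in D(A)\})<\infty$; if this mass is $0$ then by exchangeability $\eta_0$ is carried by $\{A=\id\}$ and hence vanishes, so we may assume it is positive.

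The heart of the argument is to show that $\eta_0$-almost every $A$ satisfies $|D(A)|=1$. Here I would condition $\eta_0$ on the finite-mass event $\{1\in D(A)\}$, normalize, and apply de Finetti's theorem to the exchangeable sequence of rows $(A_{u,\cdot})_{u\ge 2}$ (exchangeable because $\eta_0$ is column-wise, hence row-wise, exchangeable and the conditioning event depends only on row $1$): under a random directing measure $\mu$ on $[s]^{s}$ the rows are i.i.d.\ $\mu$. Let $q := \mu(\{\text{non-identity rows}\})$. On $\{q=0\}$ one has $D(A)=\{1\}$ almost surely. On $\{q>0\}$, $\mu$ charges some fixed non-identity row $f^{\star}$, so by the law of large numbers a positive density of rows are equal to $f^{\star}$; since $f^{\star}$ displaces some coordinate $j_0\in[s]$, the $j_0$-th column carries a positive off-diagonal asymptotic frequency, contradicting $|A|_s = I_s$. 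Hence $\eta_0(\,\cdot\mid 1\in D(A))$ is carried by $\{D(A)=\{1\}\}$; by column-wise exchangeability the same holds with $1$ replaced by any $u\in\Nat$, and together with $\eta_0(\{\id\}) = 0$ this yields $|D(A)|=1$ for $\eta_0$-almost every $A$. A single invocation of de Finetti thus simultaneously excludes both finitely many ($\ge 2$) displaced rows and density-zero infinite families of displaced rows.

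With this reduction, I would finish as follows. For $\eta_0$-almost every $A$ the unique displaced row is a non-identity map $[s]\to[s]$; because $\eta$ is built from the Markov jump rates $Q_n(\id_{s,n},\cdot)$, whose non-bulk part moves a single unit between two distinct states, this row differs from $[1,\ldots,s]$ in exactly one coordinate, i.e.\ it is the map sending only $i$ to $i^\prime$ for some $i\neq i^\prime$ --- equivalently, the supporting matrix is the atom of $\mu^{u}_{i,i^\prime}$. Column-wise exchangeability then makes
\[
c_{i,i^\prime} \;:=\; \eta_0\bigl(\{A : D(A)=\{u\},\ A_{u,i}=i^\prime\}\bigr)
\]
independent of $u$, with $0\le c_{i,i^\prime}\le \eta_0(\{1\in D(A)\})<\infty$; summing over $u$ and over ordered pairs gives
\[
\eta_0 \;=\; \sum_{i\neq i^\prime} c_{i,i^\prime} \sum_{u\in\Nat} \mu^{u}_{i,i^\prime} \;=\; \sum_{i\neq i^\prime} c_{i,i^\prime}\,\mu_{i,i^\prime} \;=\; \mu_{c},
\]
as claimed. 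Combined with the previous lemma this gives $\eta = \mu_{\Sigma} + \mu_{c} = \eta_{\Sigma,c}$, which with the lemmas already established --- $\Y^\star$ is a version of $\Y$, and $\eta_{\Sigma,c}$ is column-wise exchangeable satisfying the required constraints --- completes the proof of Theorem~\ref{thm:cts_main}.

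The step I expect to be the main obstacle is the de Finetti dichotomy of the second paragraph: making precise that a positive limiting density of displaced rows really does force a positive off-diagonal asymptotic frequency in some column (so that the $|A|_s = I_s$ constraint can be brought to bear), and checking that conditioning on $\{1\in D(A)\}$ leaves enough exchangeability to invoke the representation theorem on $[s]^{s}$-valued sequences. A secondary subtlety is justifying that the surviving single-unit rows are genuinely single-coordinate maps rather than arbitrary non-identity maps $[s]\to[s]$; this is where the explicit structure of the rates $Q_n$ is used (alternatively, a multi-coordinate row may be replaced by its induced single-coordinate effect, which does not change the law of $\Y^\star$).
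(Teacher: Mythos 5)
Your argument is correct and lands on the same decomposition, but it is packaged differently from the paper's proof. The paper proceeds in two separate steps: it first invokes the shift-function machinery of the preceding lemma (the shifted measure $\eta_3^\phi$ restricted to $\{|A|_s=I_s\}$ is proportional to the point mass at $\id_s$) to conclude that only finitely many rows can be displaced and that the restricted measure is purely atomic, and then kills every atom with two or more displaced rows by the classical ``infinite orbit'' argument: such an atom would, under column-wise exchangeability, generate infinitely many equal-mass images inside $\{A^{[2]}\neq\id_{s,2}\}$, contradicting finite restricted activity. Your single conditional de~Finetti application --- condition on the finite-mass event $\{1\in D(A)\}$, normalize, represent the remaining rows as conditionally i.i.d., and observe that the directing measure either charges a non-identity row (forcing a positive off-diagonal asymptotic frequency, excluded by $|A|_s=I_s$) or charges none (forcing $D(A)=\{1\}$) --- handles both failure modes at once, since under an i.i.d.\ directing measure the number of non-identity rows is almost surely $0$ or $\infty$. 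This buys a cleaner, more self-contained argument at the cost of explicitly checking that conditioning on a row-$1$ event preserves exchangeability of the remaining rows, which you do correctly.

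One point deserves care: your first justification for why the surviving displaced row is a single-coordinate map --- ``because $\eta$ is built from $Q_n$, whose non-bulk part moves a single unit between two distinct states'' --- is circular, since identifying the structure of the non-bulk part is precisely what this lemma establishes. A priori the unique displaced row could be an arbitrary non-identity map $[s]\to[s]$ displacing several coordinates. Your fallback is the correct resolution: since a unit occupies only one state at a time, an atom at a multi-coordinate row induces the same law on $\Y^\star$ as the sum of the corresponding single-coordinate atoms $\mu^u_{i,f(i)}$ with equal masses, so the restriction of $\eta$ is \emph{equivalent} to a measure of the form $\mu_c$ even if not literally equal to one. (The paper's own proof is silent on this point, so you are not behind it here.) With that caveat made explicit, and noting that the events $\{D(A)=\{u\},\,A_{u,i}=i^\prime\}$ then partition $\{|D(A)|=1\}$ so the constants $c_{i,i^\prime}$ are well defined and independent of $u$ by exchangeability, your assembly of $\mu_c$ is complete.
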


  \begin{proof}
    We restrict our attention to the set of $A$ where
    $A^{[2]} \neq \id_{s,2}$ but $\phi_3(A) = \id_{s}$.
    This set~$B$ contains all single unit transition.
    As the measure $\eta_3^\phi$ is proportional
    to the point mass at $\id_{s}$, then $\eta$ restricted
    to the event $\{ A^{[2]} \neq \id_{s,2}, 
    \phi_3(A) = \id_{s}, |A|_s = \id_{s} \}$ is the
    sum
    \[
      \sum_{A \in B} c_{A} \delta_{A} (\cdot) .
    \]
    If $A$ contains more than a single unit transition, 
    exchangeability forces $c_A = 0$ since 
    $\eta ( \{ A \given A^{[2]} \neq \id_{s,2} \} ) < \infty$.
    The same argument shows $A \in [s]^{\mathbb{N} \otimes s}$ 
    such that $|A|_s = I_s$ and $c_A > 0$ implies
    $A$ is a single unit transition.
  \end{proof}

  This concludes the proof.

\end{proof}

\section{Examples}\label{section:examples}

Here, we describe several important examples that motivate the
current study of multi-state survival processes.  

\begin{example}[Survival process]
\label{example:msp} \normalfont
A survival process has state space \{~Alive, Dead \}
with transitions governed by the simple graph shown in Figure~\ref{fig:dead_alive}.
In this case, $s = 2$ and the edge-set is the singleton~$\{(1,2)\}$;
because the state ``Dead'' is absorbing, the space~$\P_G$
is equivalent to the one-dimensional space~$p \in [0,1)$.  
Restricting to $[n]$, suppose that all individuals at time~$t$
are still at risk.  In discrete-time, the probability of~$d$ 
individuals passing away between times~$t$ and~$t+1$ 
is equal to
\[
\int_0^1 p^{n-d} (1-p)^{d} \Sigma (dp)
\] 
where~$\Sigma$ is a probability measure on~$(0,1]$. 
The marginal distribution of the survival time 
for each patient is geometric.  
Letting $\Sigma (dp)$ be the 
conjugate prior~$\nu \cdot p^{\alpha - 1} (1-p)^{\beta-1} dp$
with $\alpha, \beta > 0$ yields a 
discrete-version of the ``beta-splitting'' process~\citep{Aldous1996}.
The marginal geometric distribution in this case
has parameter~$\beta/ (\alpha + \beta)$.

\begin{figure}[!h]
  \begin{center}
    \includegraphics[scale=0.5]{./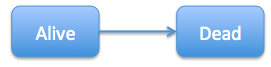}
    \caption{Graph representation of survival process}
    \label{fig:dead_alive}
  \end{center}
  \vspace{-5mm}
\end{figure}

In continuous-time, the probability of~$d$ 
individuals passing away between times~$t$ and~$t+1$ 
is proportional to
\[
\int_0^1 p^{n-d} (1-p)^{d} \Sigma (dp) +  \delta (d = 1) c_{1,2}
\] 
where~$\Sigma$ is a measure on~$(0,1]$ satisfying
$\int (1-p) \Sigma (dp) < \infty$.
In continuous-time, the marginal distribution is exponential.
The conjugate prior now relaxes the constraints to $\beta > -1$.
Considering choice of measure, \citeauthor{DempseyMSP}~\citeyear{DempseyMSP}
suggest choosing measure with~$\beta = 0$ -- called the {\em harmonic
  process}.  The harmonic process is the only family of Markov
survival processes with weakly continuous predictive distributions --
a key property in applied work. The chance of singleton events is set
to zero (i.e., $c_{1,2} = 0$). 
\end{example}

\begin{example}[Illness-death process]
\label{example:idp} \normalfont
The illness-death process has state space~\{~Healthy,~Unhealthy,~Dead\}
with transitions governed by the simple graph shown in Figure~\ref{fig:illness_death}.
The state ``Dead'' (i.e., $s = 3$) is absorbing, the space~$\P_G$
is equivalent to a three-dimensional space.
The \emph{bi-directional} illness-death process includes
the additional edge~(Unhealthy,Health), allowing the patient
to recover.  
Both processes can be viewed 
as refinements of the survival process.

\begin{figure}[!h]
  \begin{center}
    \includegraphics[scale=0.5]{./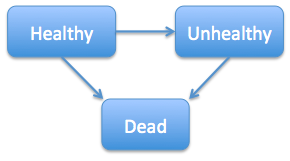}
    \caption{Graph representation of the illness-death process}
    \label{fig:illness_death}
    \vspace{-5mm}
  \end{center}
\end{figure}


An issue arises for the bi-directional illness-death process
when the definitions of ``Healthy'' and ``Unhealthy'' are arbitrary
(i.e., have no scientific value).  Potentially the labels are
exchangeable and, if so, the process is a Markov exchangeable survival
process.   Such considerations lead to natural constraints on the
choice of measure -- see section~\ref{section:nested} for a
discussion.
\end{example}

\begin{example}[Comorbidities]\label{example:multi_risk}
\normalfont
Comorbidities are multiple stochastic processes experienced simultaneously 
by the same patient.
Figure~\ref{fig:multiple_risks}, for example, represents~$L$ binary 
risk processes each with an absorbing state.   In general,~$Y (i,t) =
(Y_1 (i,t), \ldots, Y_L (i,t))$ is an $L$-vector state-space process.

\begin{figure}[!h]
  \begin{center}
    \includegraphics[scale=0.5]{./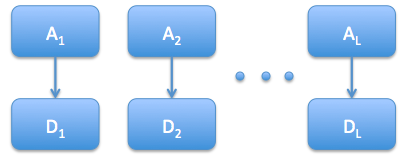}
    \caption{Graph representation of co-morbidities process}
    \label{fig:multiple_risks}
  \end{center}
  \vspace{-5mm}
\end{figure}

An example of comorbidities is provided by 
\citeauthor{Aalen1980}~\citeyearpar{Aalen1980} 
where two events (onset of menopause, and occurrence of chronic skin disease)
were studied.  Patients could also experience a third event, death.
In this case, we have~$L=2$ binary risk processes each with 
absorbing states and then a final absorbing state of death.

\end{example}

\begin{example}[Competing risks]\label{example:competing_risk}
\normalfont
A patient may experience failure for a multitude of reasons.
Figure~\ref{fig:competing_risks} shows a setting where
failure can be caused by~$L$ risks.  Unlike comorbidities,
a patient may only experience one of the competing risks.

\begin{figure}[!h]
  \begin{center}
    \includegraphics[scale=0.5]{./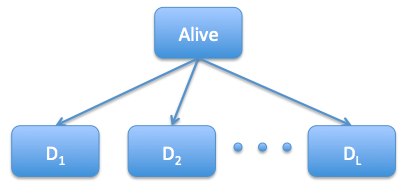}
    \caption{Graph representation of competing risks process}
    \label{fig:competing_risks}
  \end{center}
  \vspace{-5mm}
\end{figure}
\end{example}

\begin{example}[Recurrent events]\label{example:recurrent_events}
\normalfont
Recurrent events are events that occur more than
once per patient.  Examples include recurring
hospital admissions, tumor recurrence, and
repeated heart attacks.
For recurrent events, the state-space is countably
infinite; however, the state-space is structured.
We can assume all patients initial values
are zero~(i.e. $Y(u,0) = 0$) and given~$Y(u,t) = k$
then the transition at the next jump time must
be to~$k+1$. This structure allows us to provide the following
discrete and continuous-time characterizations of the
exchangeable, Markov recurrent event processes -- 
extending Theorems~\ref{thm:discrete_main}
and~\ref{thm:cts_main}.

\begin{corollary}[Discrete-time characterization]
  \label{cor:discrete_recurrent}
  Let~${\bf Y} = (\Y (t), t \in \mathbb{N})$ be a
  discrete-time Markov, exchangeable 
  recurrent event process.  
  Then there exists a probability measure
  $\Sigma$ on $[0,1]^{\mathbb{N}}$ such that~$\Y_{\Sigma}^\star$ is a 
  version of~$\Y$.
\end{corollary}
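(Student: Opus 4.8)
The plan is to reduce the problem to the bounded-state-space characterization of Theorem~\ref{thm:discrete_main} by truncating the recurrent-event state space at a finite level and then gluing the resulting measures together with Kolmogorov's extension theorem. Note that~$\Y$ itself is not directly covered by Theorem~\ref{thm:discrete_main}, since its state space~$\Nat$ has no absorbing state; the point is that every finite truncation does.

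First I would fix $\tau\in\Nat$ and define the truncated process $\Y^{(\tau)}:=\tau\wedge\Y$, obtained by collapsing all states $\ge\tau$ into the single state~$\tau$. Because a unit in state~$k$ can only move to~$k$ or~$k+1$, every transition out of a state $\ge\tau$ stays $\ge\tau$, so the lumped process is again Markov and~$\tau$ is absorbing for it. Thus $\Y^{(\tau)}$ is a Markov, exchangeable multi-state survival process on $\{0,1,\dots,\tau\}$ whose transition graph is the directed path $0\to1\to\cdots\to\tau$ (exchangeability and consistency under subsampling are inherited because $\tau\wedge(\cdot)$ is applied coordinatewise and commutes with relabeling and with restriction to~$[n]$). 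Theorem~\ref{thm:discrete_main} then supplies a probability measure $\Sigma_\tau$ on $\P_{G_\tau}$; since a transition matrix compatible with the path graph is determined by the holding probabilities $(p_0,\dots,p_{\tau-1})\in[0,1]^{\tau}$ (with $P_{\tau,\tau}=1$), I would regard $\Sigma_\tau$ as a probability measure on $[0,1]^{\tau}$.

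Next I would check that $\{\Sigma_\tau\}_{\tau\in\Nat}$ is a projective family under the coordinate projections $[0,1]^{\tau}\to[0,1]^{\tau'}$, $\tau'\le\tau$. The key identity is $\tau'\wedge\Y=\tau'\wedge(\tau\wedge\Y)$, so the $\tau'$-truncation of~$\Y$ coincides with the $\tau'$-truncation of $\Y^{(\tau)}$; applying the explicit description of $\Y^{(\tau)}$ (draw $P_t\sim\Sigma_\tau$ at each time, act on each unit independently) and pushing forward through the truncation map on transition matrices shows that the image of $\Sigma_\tau$ under the projection drives a version of $\tau'\wedge\Y$. By the uniqueness part of Theorem~\ref{thm:discrete_main}, this image must equal $\Sigma_{\tau'}$. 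Kolmogorov's extension theorem then yields a unique probability measure $\Sigma$ on $[0,1]^{\Nat}$ with these finite-dimensional marginals.

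Finally I would verify that $\Y^\star_\Sigma$ --- the process which at each time~$t$ draws $(p^{(t)}_k)_{k\ge0}\sim\Sigma$ and then independently moves each unit currently in state~$k$ to $k+1$ with probability $1-p^{(t)}_k$ --- is a version of~$\Y$: by construction $\tau\wedge\Y^\star_\Sigma$ restricted to~$[n]$ has the law of $\Y^{(\tau)}_{[n]}$ for every $\tau$ and $n$, and letting $\tau,n\uparrow\infty$ determines the joint law of the $\Nat\times\Nat$-indexed process, giving stochastic equivalence. The step I expect to be the main obstacle is the consistency verification: one must make sure that the absorbing state of a coarser truncation is exactly the lumped image of the finer dynamics, so that the induced map on driving measures really is the plain coordinate projection and the projective limit is well posed; once that bookkeeping is in place, the extension step and the equivalence check are routine.
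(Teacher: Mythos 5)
Your proposal is correct and follows essentially the same route as the paper's own proof: truncate the recurrent-event process at level $\tau$ to obtain a finite-state Markov, exchangeable multi-state survival process with $\tau$ absorbing, invoke Theorem~\ref{thm:discrete_main} to get a measure $\Sigma_\tau$ for each truncation, verify projective consistency of the family $\{\Sigma_\tau\}$ under the restriction/projection maps, and pass to the limit via Kolmogorov extension. Your write-up is in fact more explicit than the paper's (in identifying $\P_{G_\tau}$ with $[0,1]^{\tau}$ and in flagging the lumping and consistency bookkeeping); the only caveat is that the ``uniqueness part'' of Theorem~\ref{thm:discrete_main} you invoke is not literally stated there, though it holds by the usual de Finetti-type argument and the paper's own proof relies on the same implicit fact when asserting consistency of the $\eta_\tau$.
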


\begin{proof}[Proof of Corollary~\ref{cor:discrete_recurrent}]
  Consider the recurrent event process~$\Y$ up
  time $\tau<\infty$. Then $\Y^\star = \tau \wedge \Y$
  is a version of $\Y$ on $t \in 0,1,\ldots,\tau$.
  Let~$\eta_{\tau}$ to denote the measure associated
  with~$\Y^{\star}$.  
  For $P \in \P_{\tau}$, let $R_{\tau^\prime, \tau} (A)$ be
  the restriction of this $\P_{\tau^\prime}$.
  Then for $\tau^\prime < \tau$
  \[
    \eta_{\tau} ( \{ P \in \P_{\tau} \given R_{\tau^\prime, \tau} (P) = P^{\star} \})
    = \eta_{\tau^\prime} ( \{ P^{\star} \} )
  \]
  So we have consistency across~$\tau > 0$.  
  We define the measure~$\eta$ on $\P_{\infty}$
  by
  \[
    \eta (\cdot) = \lim_{\tau \uparrow \infty} \eta_{\tau} (\cdot)
  \]
  is the unique measure such that
  $\Y^\star$ is a version of $\Y$.
\end{proof}

\begin{corollary}[Continuous-time characterization]
  \label{cor:cts_recurrent}
  Let~${\bf Y} = (\Y (t), t \in \mathbb{R}^+ )$
  be a continuous-time Markov, exchangeable recurrent event
  process such that $\Y_u (0) = 0$ for all $u \in \mathbb{N}$.
  Let $I$ denote the infinite identity matrix.
  Then there exists a probability measure~$\Sigma$ 
  on~$[0,1]^{\mathbb{N}}$ satisfying
  \[
    \Sigma ( \{ I \} ) = 0
    \text{ and }
    \int_{[0,1]^{\mathbb{N}}} (1-P_{\min}) \Sigma (dp) < \infty 
    \text{ where }
    P_{\min} = \min_{i \in \mathbb{N}} P_{i}
  \]
  and a set of constants~${\bf c} = \{ c_{i, i+1} \given i \in
  \mathbb{N} \}$  such that $\Y_{\Sigma, c}^\star$ is a version
  of~$\Y$.
\end{corollary}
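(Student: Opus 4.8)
The plan is to mirror the proof of Corollary~\ref{cor:discrete_recurrent}: truncate the countably infinite state-space at a finite level, apply the finite-state characterization of Theorem~\ref{thm:cts_main} on each truncation, and pass to a projective limit over the truncation level.

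First I would fix $\tau \in \mathbb{N}$ and set $\Y^{(\tau)} := \tau \wedge \Y$ (componentwise). Since $\Y$ can only move $k \to k+1$, state $\tau$ is absorbing for $\Y^{(\tau)}$; because $\Y_u(0) = 0 \neq \tau$, the process $\Y^{(\tau)}$ is a multi-state survival process on $\{0,\ldots,\tau\}$ whose transition graph is the path $0 \to 1 \to \cdots \to \tau$. It is exchangeable and consistent under subsampling because truncation commutes with permutations and with restriction to $[n]$. The point needing an argument is that $\Y^{(\tau)}$ is again Markov: in the $n$-restricted process a transition of a unit already in a state $\ge \tau$ leaves $\Y^{(\tau)}$ unchanged, and in a simultaneous transition the movements of such units integrate out to a factor of one, so the conditional rate and jump law of every visible transition depend on the past only through the truncated configuration $(x_0,\ldots,x_{\tau-1})$.

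Next I would apply Theorem~\ref{thm:cts_main} to each $\Y^{(\tau)}$: there is a measure $\Sigma_\tau$ on $\P_{G_\tau}$, coordinatized by the diagonal entries $(P_{0,0},\ldots,P_{\tau-1,\tau-1}) \in [0,1]^\tau$, with $\Sigma_\tau(\{I\}) = 0$ and $\int(1 - P^{(\tau)}_{\min})\,\Sigma_\tau(dP) < \infty$, together with constants $c^{(\tau)} = \{c_{k,k+1} : 0 \le k < \tau\}$, such that $\Y^\star_{\Sigma_\tau,c^{(\tau)}}$ is a version of $\Y^{(\tau)}$. Using the uniqueness of the dislocation/erosion decomposition inside the proof of Theorem~\ref{thm:cts_main} --- the dislocation part being the portion of the driving measure carrying nontrivial asymptotic frequency, the erosion part the rest --- I would show these data are consistent across $\tau$: truncating $\Y^{(\tau)}$ to level $\tau' < \tau$ and repeating the computation above identifies $\Sigma_{\tau'}$ with the image of $\Sigma_\tau$ under the coordinate projection $[0,1]^\tau \to [0,1]^{\tau'}$ and fixes $c_{k,k+1}$ unambiguously for $k < \tau'$. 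Thus $(\Sigma_\tau)_\tau$ is a projective family and $c := \{c_{k,k+1} : k \in \mathbb{N}\}$ is well defined.

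It remains to take the projective limit $\Sigma$ on $[0,1]^{\mathbb{N}}$, to note $\Sigma(\{I\}) = 0$ (the set $\{p : p_k = 1 \text{ for all } k\}$ lies in the $\Sigma$-null preimage of $\{I\} \subset [0,1]^\tau$ for every $\tau$), and to check that applying $\tau \wedge$ to $\Y^\star_{\Sigma,c}$ returns $\Y^\star_{\Sigma_\tau,c^{(\tau)}}$, a version of $\Y^{(\tau)}$, for every $\tau$; since these truncations determine the law of a structured $\mathbb{N}^{\mathbb{N}}$-valued process, $\Y^\star_{\Sigma,c}$ is a version of $\Y$. The hard part will be this last step together with the integrability clause: the $\Sigma_\tau$ need only be $\sigma$-finite (finite away from the identity), so the extension is not a bare application of Kolmogorov's theorem, and since $\int(1 - \min_{k<\tau}p_k)\,\Sigma_\tau(dP)$ grows with $\tau$, monotone convergence yields for free only $\int(1 - p_i)\,\Sigma(dp) < \infty$ for each fixed $i$ --- equivalently, finiteness of every characteristic index $\zeta_n$, which is all that the construction of $\Y^\star_{\Sigma,c}$ actually needs. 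Upgrading this to the displayed bound $\int(1-P_{\min})\,\Sigma(dP) < \infty$ with $P_{\min} = \inf_i P_i$ requires an additional argument, or a restatement of the integrability hypothesis in the per-coordinate (equivalently, per-finite-configuration) form.
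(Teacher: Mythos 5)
Your proposal follows essentially the same route the paper intends: the paper omits the continuous-time proof entirely, pointing to the proof of Corollary~\ref{cor:discrete_recurrent}, which is exactly the truncation $\tau\wedge\Y$, application of the finite-state characterization (Theorem~\ref{thm:cts_main}), and passage to a projective limit that you carry out --- and you supply more detail than the paper does on why the truncated process is again Markov and why the $(\Sigma_\tau, c^{(\tau)})$ are projectively consistent. Your closing caveat is well taken and not addressed by the paper: the limit argument only delivers $\int(1-P_{\min}^{(\tau)})\,\Sigma(dP)<\infty$ for each finite truncation level, so the displayed condition with $P_{\min}=\inf_{i\in\mathbb{N}}P_i$ is stronger than what the construction yields or requires, and the corollary's integrability clause should really be read in the per-finite-configuration form.
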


A similar proof can be constructed for the continuous-time setting and
is therefore omitted.

\end{example}

\section{Details on choice of measure}
\label{app:loglinear}

Let~$Z$ be a positive, stationary L{\'e}vy process on~$\mathbb{R}_+$.
As these processes are positive, it is natural to work with the
cumulant function
\[
K(t)  = \log \left( \mathbb{E} \left[ e^{- Z(t) } \right] \right) = 
\log \left( \mathbb{E} \left[ e^{-t X} \right] \right)
\]
for~$t \geq 1$ and $X = Z(1)$ is an infinitely divisible distribution.
The L{\'e}vy-Khintchine characterization for positive, stationary,
L{\'e}vy processes implies
\[
K(t) = - \left[ \gamma t + \int_0^{\infty} (1-e^{-ty}) w(dy) \right]
\]
for some~$\gamma \geq 0$ and measure $w(\cdot)$ on~$\mathbb{R}_+$,
called the L{\'e}vy measure, such that the integral is finite for
all~$t > 0$.

\cite{DempseyMSP} showed that every exchangeable, Markov survival
process can be generated via a L{\'e}vy process construction.
The proof stems from connecting $\gamma$ to the erosion measures
(i.e.,~$c \geq 0$ in Theorem~\eqref{thm:cts_main}) and the L{\'e}vy
measure to the dislocation measures~(i.e., $\Sigma (\cdot)$ in
Theorem~\eqref{thm:cts_main}).
For instance, the harmonic process can be constructed via a L{\'e}vy
process with~$\gamma = 0$ and~$w (dy) = \nu e^{-\rho y} dy / (1-
e^{-y}))$.

Now consider the proportional conditional hazards model as
described by~\cite{Kalbfleisch1978, Hjort1990}, and \cite{Clayton1991}. 
In the proportional conditional hazards model, the hazard
for individual~$i$ is $w_i Z(t)$ for some~$w_i > 0$ typically 
$w_i = \exp (x_i^\prime \beta)$ depending on a set of baseline
covariates~$x_i$.
Then the conditional survival density for particle~$i$ is $\exp \left(
  - w_i \int_0^t Z (t) \right) \left( 1 - e^{-w_i Z (t)} \right)$.
Assume there is a single covariate that is a factor with a finite
number of levels (i.e.,~$x_i \in \{1,\ldots,k\} := [k]$).  Then $w_i =
w_{x_i}$; that is, there are a finite set of weights.
The joint marginal density can be derived in a similar way as before.
Here, however, the non-normalized transition rules are
\begin{align*}
\lambda (R,D) &= \mathbb{E} \left(  e^{- Z (t) \sum_{i \in R} w_{x_i}}
                \prod_{i \in D} \left( 1 - e^{-w_{x_i} Z(t)} \right) 
                \right) \\
              &= \mathbb{E} \left( \prod_{j=1}^{k} \left(
                \exp(-Z(t))^{w_j} \right)^{r_j} \left( 1 - \exp(-Z(t))^{w_j} \right)^{d_j} \right) \\
              &= \int_0^{1} \prod_{j=1}^{k} \left( p^{w_j}
                \right)^{r_j} \left( 1 - p^{w_j} \right)^{d_j} \Sigma (dp)
\end{align*}
where~$r_j = \# \{ i \in R \text{ s.t. } X_i = j \}$,
$d_j = \# \{ i \in D \text{ s.t. } X_i = j \}$, and~$\Sigma(\cdot)$ is
the dislocation measure.
The final equality is due to the connection between the L{\'e}vy
measure and the dislocation measure.
It is clear from above that the proportional conditional hazards
model corresponds to a particular choice of the dislocation measure. 
Namely, the proportional conditional hazards model corresponds to~$p_i \to
p^{w_{x_i}}$.
So on the $[0,1]$-scale, the model is conditionally proportional on
the log-scale. That is,~$\log (p_i ) = w_{x_i} \log(p)$.
Alternative choices exist. For example, the model may be conditionally
proportional on the logistic scale; that is,~$\log \left( p_i / (1-p_i
  ) \right) = w_{x_i} \log \left( p / (1-p) \right)$. We do not pursue
such alternatives in this paper.


\end{document}